\newtheorem{nclaim}{Claim}
\title{Characterising Fixed Parameter Tractability of Query Evaluation Over Guarded TGDs}
\author{Cristina Feier}
\institute{University of Bremen}
\begin{document}
		\maketitle
	\begin{abstract}
		We study the parameterized complexity of evaluating Ontology Mediated Queries (OMQs) based on Guarded TGDs (GTGDs) and Unions of Conjunctive Queries (UCQs), in the case where relational symbols have unrestricted arity and where the parameter is the size of the OMQ. We establish exact criteria for fixed-parameter tractability (fpt) evaluation of recursively enumerable classes of such OMQs (under the widely held Exponential Time Hypothesis). One of the main technical tools introduced in the paper is an fpt-reduction from deciding parameterized uniform CSPs to parameterized OMQ evaluation.  The reduction preserves measures which are known to be essential for classifying recursively enumerable classes of parameterized uniform CSPs: submodular width (according to the well known result of Marx for unrestricted-arity schemas) and treewidth (according to the well known result of Grohe for bounded-arity schemas). As such, it can be employed to obtain hardness results for evaluation of recursively enumerable classes of parameterized OMQs both in the unrestricted and in the bounded arity case. Previously, in the case of bounded arity schemas, this has been tackled using a technique requiring full introspection into the construction employed by Grohe. \end{abstract}

	\newcommand{\mi}[1]{\mathit{#1}}
\newcommand{\ins}[1]{\mathbf{#1}}
\newcommand{\adom}[1]{\mathsf{dom}(#1)}
\renewcommand{\paragraph}[1]{\textbf{#1}}
\newcommand{\ra}{\rightarrow}
\newcommand{\fr}[1]{\mathsf{fr}(#1)}
\newcommand{\dep}{\Sigma}
\newcommand{\sch}[1]{\mathsf{sch}(#1)}
\newcommand{\ar}[1]{\mathsf{ar}(#1)}
\newcommand{\body}[1]{\mathsf{body}(#1)}
\newcommand{\head}[1]{\mathsf{head}(#1)}
\newcommand{\guard}[1]{\mathsf{guard}(#1)}
\newcommand{\class}[1]{\mathbb{#1}}
\newcommand{\size}[1]{||#1||}
\newcommand{\tw}[1]{\mathsf{TW}_{#1}}
\newcommand{\var}[1]{\mathsf{var}(#1)}
\newcommand{\omq}[1]{\mathsf{omq}(#1)}
\newcommand{\base}[1]{\mathsf{base}(#1)}
\newcommand{\chase}[2]{\mathsf{ch}_{#1}(#2)}
\newcommand{\ch}[3]{\mathsf{ch}_{#1}(#2,#3)}
\newcommand{\gchase}[2]{\mathsf{chase}_{\downarrow}(#1,#2)}
\newcommand{\lchase}[4]{\mathsf{chase}^{#1}_{#2}\left(#3,#4\right)}
\newcommand{\fmods}[2]{\mathsf{fmods}(#1,#2)}
\newcommand{\level}[2]{\mathsf{level}_{#1}(#2)}
\newcommand{\rew}[2]{\mathsf{rew}(#1,#2)}
\newcommand{\type}{\mathsf{type}}
\newcommand{\atoms}[1]{\mathsf{atoms}(#1)}
\newcommand{\complete}[2]{\mathsf{complete}(#1,#2)}
\newcommand{\N}[1]{\mathsf{N}(#1)}
\newcommand{\rt}[1]{\mathsf{root}(#1)}
\newcommand{\PTime}{\text{\rm \textsc{PTime}}}
\newcommand{\Pclass}{\text{\rm \textsc{P}}}
\newcommand{\NP}{\text{\rm \textsc{NP}}}
\newcommand{\FPT}{\text{\rm \textsc{FPT}}}
\newcommand{\W}{\text{\rm \textsc{W}[1]}}
\newcommand{\Wp}[1]{\text{\rm \textsc{W}[#1]}}
\newcommand{\WW}{\text{\rm \textsc{W}[2]}}
\newcommand{\EXP}{\text{\rm \textsc{ExpTime}}}
\newcommand{\TWOEXP}{\text{\rm \textsc{2ExpTime}}}
\newcommand{\THREEEXP}{\text{\rm \textsc{3ExpTime}}}
\newcommand{\FOUREXP}{\text{\rm \textsc{4ExpTime}}}
\newcommand{\logspace}{\text{\rm \textsc{LogSpace}}}


%
%
%

\newcommand{\markfull}{\qedboxfull}
\newcommand{\markempty}{\qed} 
	
\section{Introduction}
\label{sec:intro}

Ontology mediated querying refers to the scenario where queries are posed to a database enhanced with a logical theory, commonly referred to as an \emph{ontology}. The ontology refines the specific knowledge provided by the database by means of a logical theory. Popular ontology languages are decidable fragments of first order logic (FOL) like description logics \cite{DBLP:conf/dlog/2003handbook}, guarded TGDs \cite{CaGK13}, monadic disjunctive datalog \cite{BCLW14}, but not only --  non-monotonic formalisms 
like answer set programming \cite{ASPLifschitz99}, 
or combinations of languages from the former and latter category like r-hybrid knowledge bases \cite{Rosa05}, g-hybrid knowledge bases \cite{DBLP:journals/tplp/HeymansBPFN08}, etc. have also been considered. As concerns query languages, atomic queries (AQs), conjunctive queries (CQs), and unions thereof (UCQs) are commonly used. A tuple $(\mathcal{L}, \mathcal{Q})$, where $\mathcal{L}$ is an ontology language and $\mathcal{Q}$ is a query language, is referred to as an \emph{OMQ language}.


One thoroughly explored fragment of FOL as a basis for ontology specification languages is that of \emph{tuple generating dependencies} (TGDs). A tgd is a rule (logical implication) having as body and head conjunctions of atoms, where some variables occurring in head atoms might be existentially quantified (all other variables are universally quantified). As such, it potentially allows the derivation of atoms over fresh individuals (individuals not mentioned in the database). Answering (even atomic) queries with respect to sets of tgds is undecidable \cite{CaGK13}. However, there has been lots of work on identifying decidable fragments \cite{CaGK13,BLMS11,CaGP12}. A prominent such fragment is that of Guarded TGDs (GTGD) \cite{CaGK13}: a tgd is \emph{guarded} if all universally quantified variables occur as terms of some body atom, called \emph{guard}. 


While query answering with respect to GTGDs is decidable, the combined complexity of the problem is quite high: \exptime-complete for bounded arities schemas, and \TWOEXP-complete in general. A natural question is when can OMQs from (GTGD, UCQ) be evaluated efficiently? A first observation is that by fixing the set of tgds, the complexity drops to \NP\ for evaluating CQs, and to \PTime\ for evaluating AQs and CQs of bounded treewdith \cite{Cali:2009:DPU:1514894.1514897}, which is similar to the complexity of query evaluation over databases \cite{DBLP:conf/vldb/Yannakakis81}. 

Efficiency of query evaluation over databases has been a long evolving topic in the database community: starting with results concerning tractability of acyclic CQs evaluation \cite{DBLP:conf/vldb/Yannakakis81}, extended to  bounded treewidth CQs in \cite{ChRa00}, and culminating, in the case of bounded arity schemas, with a famous result of Grohe which characterizes classes of CQs which can be efficiently evaluated in a parameterized complexity framework where the parameter is the query size. In this setting, under the assumption that $\FPT \neq \W$,  Grohe \cite{Grohe07} establishes that those and only those recursively enumerable (r.e.) classes of CQs which have bounded treewidth modulo homomorphic equivalence are fixed-parameter tractable (fpt). It is also shown  that fpt coincides with polytime evaluation in this case.


As concerns OMQs from (GTGD, UCQ) over bounded arity schemas, a similar characterization to that of Grohe has been established in a parameterized setting where the parameter is the size of the OMQ \cite{BDFLP-PODS20}. The cut-off criterium for efficient evaluation is again bounded treewidth modulo equivalence, only this time equivalence takes into account also the ontology. An OMQ from (GTGD, UCQ) has \emph{semantic treewidth} $k$ if there exists an equivalent OMQ from (GTGD, UCQ) whose UCQ has syntactic treewidth $k$ \cite{BFGP19}. Then, under the assumption that $\FPT \neq \W$, a r.e. class of OMQs from (GTGD, UCQ) over bounded arity schemas can be evaluated in fpt iff it has bounded semantic treewidth. The similarity of the characterization with the database case is not coincidental: the results for OMQs build on the results of Grohe in a non-trivial way. In fact, the lower bound proof uses a central construction from \cite{Grohe07}, but has to employ sophisticated techniques to adapt this to OMQs.

\smallskip
The main open question which is adressed in this paper is: \emph{when is it possible to efficiently evaluate OMQs from (GTGD, UCQ) in the general case, i.e. when there is no restriction concerning schema arity?} We again consider a parameterized setting, where the parameter is the size of the OMQ. This is a reasonable choice as the size of the OMQ is usually much smaller than the size of the database. As such, we are interested in \emph{investigating the limits of fixed-parameter tractability of evaluating a class of OMQs $\class{Q}$ from (GTGD, UCQ)}. We denote such a parameterized problem as $\text{p-OMQ}(\class{Q})$. 
\smallskip

Before giving an overview of our main results, we review some results concerning efficiency of solving \emph{constraint satisfaction problems} (CSP). These are relevant, as CQ evaluation over databases is tightly linked to solving a certain type of CSPs, called \emph{uniform CSPs}. Given two classes of relational structures $\class{A}$ and $\class{B}$, a CSP problem  $(\class{A}, \class{B})$ asks whether there exists a homomorphism from some relational structure in $\class{A}$ to another relational structure in $\class{B}$. The uniform case refers to the situation where $\class{A}$ is fixed and $\class{B}$ is the class of all relational structures; in this case, the problem is denoted as $(\class{A}, \_)$.  The parameterized version of the problem (where for a problem instance $(A,B)$ with $A \in \class{A}$ the parameter is the size of $A$) is denoted as $\text{p-CSP}(\class{A}, \_)$. When restricted to classes of finite structures, uniform CSPs can be seen as an alternative presentation of the problem of evaluating a class of Boolean CQs over databases. In fact, Grohe's characterization for fpt evaluation of r. e. classes of CQs in the bounded arity case has been achieved via a uniform CSP detour \cite{Grohe07}. 


In the unrestricted arity case, Marx \cite{Marx10} established in a seminal result the border for fpt evaluation of uniform CSPs of the form $\text{p-CSP}(\class{A}, \_)$, where $\class{A}$ is closed under underlying hypergraphs. 
The restriction has been lifted in \cite{DBLP:conf/ijcai/ChenGLP20}, yielding a full characterization for parameterized uniform CSPs of unrestricted arity. Both results are based on a widely held conjecture, the Exponential Time Hypothesis \cite{DBLP:journals/jcss/ImpagliazzoPZ01} and rely on  a new structural measure, \emph{submodular width}: 

\begin{theorem}[Theorem 1, \cite{DBLP:conf/ijcai/ChenGLP20}]
	\label{thm:ChenUnbCSPs}
	Let $\class{C}$ be a r.e. class of structures. Assuming the Exponential Time Hypothesis, $\text{p-CSP}(\class{C}, \_)$ is fixed-parameter tractable iff  $\class{C}$ has bounded semantic submodular width. 
\end{theorem}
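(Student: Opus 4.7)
The plan is to prove both directions of the biconditional: the algorithmic direction by a preprocessing reduction to Marx's known fpt algorithm for bounded submodular width, and the hardness direction by extending Marx's lower-bound construction from classes closed under underlying hypergraphs to arbitrary r.e.\ classes.

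For the easy direction, I would exploit the fact that whether $(A,B)$ is a yes-instance of CSP depends on $A$ only up to homomorphic equivalence. On input $(A,B)$ with $A \in \class{C}$, the preprocessing systematically searches over structures hom-equivalent to $A$ for one whose submodular width is at most the assumed bound $k$; such a witness $A'$ exists by definition of semantic submodular width, and can be found in time depending only on $|A|$ by enumerating candidate structures and testing hom-equivalence to $A$ (which is decidable and parameterized only by $|A|$). I would then invoke Marx's fpt algorithm for bounded-submodular-width CSP on $(A',B)$, yielding a total running time of the form $f(|A|) \cdot |B|^{g(k)}$, which is fpt in the parameter $|A|$.

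For the converse, assume $\class{C}$ has unbounded semantic submodular width. Using r.e.-ness I can effectively produce, for each target bound $k$, a witness $A_k \in \class{C}$ whose semantic submodular width exceeds $k$. The plan is then to adapt Marx's reduction that encodes ETH-hard instances as homomorphism queries so as to use these particular $A_k$ on the left-hand side. The main obstacle is the absence of closure: Marx's original hardness argument is free to pick, for each hard hypergraph, any structure in the class with that underlying hypergraph, which is automatic under closure but not here. I plan to bypass this by constructing the right-hand side $B_k$ directly from $A_k$ (rather than from its hypergraph), so that homomorphisms $A_k \to B_k$ faithfully simulate the hard instance. Since semantic submodular width is hom-equivalence invariant, we may additionally assume $A_k$ is a core, and apply an amplification/pumping step if further structural regularity is needed before the embedding. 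The resulting fpt-reduction then rules out $\text{p-CSP}(\class{C}, \_) \in \FPT$ under ETH, completing the proof.
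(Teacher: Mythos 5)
This statement is not proved in the paper at all: it is Theorem~1 of the cited reference \cite{DBLP:conf/ijcai/ChenGLP20}, imported verbatim and used as a black box (both for the fpt upper bound via Lemma~\ref{lem:FPTsyntactic} and for the hardness direction of Theorem~\ref{thm:unbArityGDLog}). So there is no in-paper proof to compare your attempt against; any assessment has to be of your sketch on its own terms relative to the known literature.

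On those terms, your outline has the right overall shape. The algorithmic direction is essentially correct and complete modulo routine details: since $\mathrm{CSP}(A,B)$ depends on $A$ only up to homomorphic equivalence, you may enumerate finite structures, test hom-equivalence with $A$ and compute submodular width until you find a witness $A'$ of width at most $k$ (the search terminates because such a witness exists by hypothesis, and its cost depends only on $|A|$), and then run Marx's algorithm on $(A',B)$. The hardness direction, however, is where the entire content of the Chen--Gottlob--Longo--Pichler result lives, and your sketch leaves it essentially unaddressed. Saying you will ``construct the right-hand side $B_k$ directly from $A_k$ rather than from its hypergraph'' and ``apply an amplification/pumping step if further structural regularity is needed'' names the obstacle (Marx's reduction only controls the underlying hypergraph, and without closure you cannot choose which structure over that hypergraph you get) but does not explain how to overcome it: the issue is that the relations of $A_k$ may identify or constrain tuples in ways that break the simulation of the ETH-hard instance, and passing to a core or to a hom-equivalent structure does not by itself restore the hypergraph-level freedom Marx's argument relies on. As written, this direction is a plan to have a plan rather than a proof; if you intend to supply it, you would need to spell out the encoding of $B_k$ and verify that homomorphisms $A_k \to B_k$ are in bijective correspondence (or at least in a solution-preserving correspondence) with solutions of the hard instance, which is precisely the technical core of the cited paper.
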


As our first main result shows, submodular width also plays a role in our characterization:



\smallskip
 \textbf{Main Result 1.} Let $\class{Q}$ be a r. e. class of OMQs from (GTGD, UCQ). Assuming the Exponential Time Hypothesis, $\text{p-OMQ}(\class{Q})$ is fixed-parameter tractable iff $\class{Q}$ has bounded semantic submodular width.
\smallskip

To prove the result, we exploit the fact that every OMQ from (GTGD, UCQ) can be rewritten into an OMQ from (GDLog, UCQ) \cite{BDFLP-PODS20}, where GDLog stands for Guarded Datalog, the restriction of GTGD to rules with only universally quantified variables. For OMQs from (GDLog, UCQ), we construct equivalent OMQs called \emph{covers} which are witnesses for bounded semantic submodular width. Covers are based on sets of \emph{characteristic databases for OMQs} which are databases that entail the OMQs and which are sufficiently minimal with respect to the homomorphism order, in a very specific sense. Typically, in a database setting, the database induced by a CQ (or its core) can be seen as a canonical database  which entails the query and on which to base further constructions. However, in the case of OMQs which pose restrictions  on the database schema this is no longer possible: a CQ might contain symbols which are not allowed to occur in a database. In fact, this is a typical usage of ontologies: to enrich the database schema with new terminology. 
Based on these notions, for OMQs from (GDLog, UCQ), we establish a syntactic characterization of the fpt border:

\smallskip
\textbf{Main Result 2.} For $\class{Q}$ a r. e. class of OMQs from (GDLog, UCQ), let $\class{Q}_c$ and $\class{D}_{\class{Q}}$ be the classes of covers and characteristic databases for OMQs from $\class{Q}$, respectively. Under the Exponential Time Hypothesis, the following statements are equivalent: 
\begin{enumerate}
	\item $\text{p-OMQ}(\class{Q})$ is fixed-parameter tractable;
	\item $\class{Q}_c$ has bounded submodular width;
	\item $\class{D}_{\class{Q}}$ has bounded submodular width. 
\end{enumerate}
\smallskip
The hardness result for the above characterization is obtained via an fpt-reduction from parameterized uniform CSP evaluation to parameterized OMQ evaluation.

\smallskip
\textbf{Main Result 3.} For $Q$ an OMQ from $(\text{GDLog}, \text{UCQ})$, there exists an fpt-reduction from $\text{p-CSP}(\class{D}_{Q},\_)$ to $\text{p-OMQ}(\{Q\})$.
\smallskip

The reduction is important also as a stand-alone result as it can be used as a black-box tool to port results from the uniform CSP/database realm to the OMQ one. Actually, both Main Result~1 and Main Result~2 above can be cast into results characterizing fixed-parameter tractability for classes of OMQs in the bounded arity case, by replacing submodular width with treewidth and the Exponential Time Hypothesis with the assumption that $\FPT \neq \W$. 
As such, it is possible to retrieve the semantic characterization from \cite{BDFLP-PODS20} without going into details of the construction employed by Grohe in \cite{Grohe07}, and also to provide an alternative syntactic characterization for fpt evaluation of OMQs from (GDLog, UCQ). 

\section{Preliminaries}
\label{sec:prelim}

\textbf{Structures, Databases.}
A \emph{schema} $\Sbf$ is a finite set of relation symbols with associated arities. An \emph{$\Sbf$-fact} has the form $r(\abf)$, where $r \in \Sbf$, and $\abf$ is a tuple of constants of size the arity of $r$. An \emph{$\Sbf$-structure} $A$ is a set of $\Sbf$-facts. The domain of a structure $A$, $\adom{A}$, is the set of constants which occur in facts in $A$. Given a structure $A$ and a subset $C \subseteq \adom{A}$, the \emph{sub-structure of $A$ induced by $C$}, $A|_C$,it is the structure containing all facts $r(\bbf) \in A$ such that $\bbf \subseteq C$. \emph{The product of two structures $A$ and $B$}, $A \times B$, is a structure with domain $\adom{A}  \times \adom{B}$ consisting of all facts of the form $r((a_1,b_1), \ldots, (a_n,b_n))$, where $r(a_1, \ldots, a_n) \in A$ and $r(b_1, \ldots, b_n) \in B$. 
Given two structures $A$ and $B$, a function $f:\adom{A} \to \adom{B}$ is said to be a \emph{homomorphism} from $A$ to $B$, if for every fact $r(\abf) \in A$, there exists a fact $r(\bbf) \in B$ such that $h(\abf)=\bbf$. The image of $A$ in $B$ under $f$, $f(A)$, is the set of facts of the form $r(f(\abf))$ in $B$, where $r(\abf)$ is from $A$. When such a homomorphism exists we say that $A$ \emph{maps into} $B$, denoted $A \rightarrow B$.



Two structures are \emph{equivalent} if $A \rightarrow B$ and $B \rightarrow A$. We write $A \leftrightarrow B$. They are \emph{isomorphic} if there exists a homomorphism $h$ from $A$ to $B$ which is bijective and onto, i.e. for every fact $r(\bbf) \in B$, there exists a fact $r(\abf) \in A$ such that $h(\abf)=\bbf$. A structure $A$ is a \emph{core} if every homomorphism from $A$ to itself is injective. Every structure $A$ has an induced sub-structure  $A'$ which is equivalent to $A$ and is a core. All cores of a structure are isomorphic. The homomorphism relation $\rightarrow$ is a pre-order over the set of all structures. When restricted to structures which are cores (taken up to isomorphism), $\rightarrow$ is a partial order. Following~\cite{foniok2007homomorphisms}, we will refer to this order as  the \emph{homomorphism order}.

An \emph{$\Sbf$-database} is a finite $\Sbf$-structure. For $D$ a database, and $\abf \subseteq \adom{D}$, $\abf$ is a \emph{guarded set  in $D$} if there exists a fact $r(\abf')$ in $D$ such that $\abf \subseteq \abf'$. It is a \emph{maximal guarded set} if there exists no strictly guarded superset. As in the previous definitions, we will sometimes abuse notation by using tuples of constants to refer to the underlying sets of constants instead.

\smallskip

\textbf{Conjunctive Queries, Atomic Queries.}
A \emph{conjunctive query (CQ)} is a formula of the form $q(\xbf)=\exists \ybf \phi(\xbf, \ybf)$, with $\xbf$ and $\ybf$  tuples of variables and $\phi(\xbf, \ybf)$ a conjunction of atoms having as terms only variables from $\xbf \cup \ybf$. The set $\xbf$ is the set of \emph{answer variables} of $q$, while the set $\ybf$ is the set of \emph{existential variables} of $q$. We  denote with $\var{q}$ the set of variables of $q$, and with $D[q]$ the \emph{canonical database of $q$}, i.e. the set of atoms which occur in $\phi$ (viewed as facts). When $\xbf$ is empty, the CQ is said to be \emph{Boolean} (BCQ). We will sometimes use BCQs or their canonical databases interchangeably. A sub-query of a BCQ $q$ is a BCQ $p$ such that $D[p] \subseteq D[q]$. A \emph{union of conjunctive queries (UCQ)} is a formula of the form $q(\xbf)=q_1(\xbf) \vee \ldots q_n(\xbf)$, where each $q_i(\xbf)$ is a CQ, for $i \in [n]$. An \emph{atomic query (AQ)} is a CQ in which $\phi$ contains a single atom. In the following, whenever we refer to CQs or UCQs, we tacitly assume they are Boolean. As concerns AQs, unless stated otherwise, we assume they are of the form $r(\xbf)$, i.e. they contain no existentially quantified variables. 

For a structure $I$, a CQ $q(\xbf)$, and a tuple of constants $\abf$ from $\adom{I}$, \emph{$\abf$ is an answer to $q$ over $I$}, or $I \models q(\abf)$, if there is a homomorphism $h$ from $D[q]$ to $I$ such that $h(\xbf)=\abf$. If $q(\xbf)$ is a UCQ of the form $q_1(\xbf) \vee \ldots q_n(\xbf)$, $I \models q(\abf)$ if $I \models q_i(\abf)$, for some $i \in [n]$.

\smallskip
\textbf{Ontology Mediated Queries.} 
An \emph{ontology mediated query (OMQ)} $Q$ is a triple $(\Omc, \Sbf, q(\xbf))$, where $\Omc$ is an ontology, $\Sbf$ is a schema, and $q(\xbf)$ is a query.  When $\Omc$ is specified using the ontology language $\mathcal{L}$, and $q$ using the query language $\mathcal{Q}$, we say that $Q$ belongs to the OMQ language $(\mathcal{L}, \mathcal{Q})$. The schema $\Sbf$ specifies which relational symbols can occur in databases over which $Q$ is evaluated. We say that $Q$ is an \emph{$\Sbf$-OMQ}.

Given an $\Sbf$-database $D$, and a tuple of constants $\abf$, all of which are from $\adom{D}$, we say that $\abf$ is an \emph{answer to $Q$ over $D$}, or $D \models Q(\abf)$,  if $\Omc \cup D \models q(\abf)$,  where $\models$ is the entailment relation in $\Lmc$. For $Q_1$ and $Q_2$ two OMQs over the same schema $\Sbf$, we say that $Q_1$ is \emph{contained in} by $Q_2$, written $Q_1 \subseteq Q_2$, if for every $\Sbf$-database $D$ and tuple of constants $\abf$: $D \models Q_1(\abf)$ implies $D \models Q_2(\abf)$. We also say that $Q_1$ is \emph{equivalent} to $Q_2$ if $Q_1 \subseteq Q_2$ and $Q_2 \subseteq Q_1$. 
 

\smallskip

\textbf{TGDs, Guarded TGDs.} 
Tuple Generatings Dependencies (TGDs) are first order sentences of the form $\forall \xbf \forall \ybf \ \phi(\xbf, \ybf) \to \exists \zbf\ \psi(\xbf,\zbf)$, with $\phi$ and $\psi$ conjunctions of atoms having as terms only variables from $\xbf \cup \ybf$, and from $\xbf \cup \zbf$, respectively. Such a sentence will be abbreviated as $\phi(\xbf, \ybf) \to \exists \zbf\ \psi(\xbf,\zbf)$. The problem of \emph{evaluating a UCQ $q(\xbf)$ over a set of TGDs $\Omc$ w.r.t. an  $\Sbf$-database $D$} consists in checking whether for some tuple $\abf$ over $\adom{D}$, it is the case that $D \models Q(\abf)$, where $Q$ is the OMQ $(\Omc, \Sbf, q(\xbf))$. While the problem is undecidable,  
\cite{CaGK13}, it can be characterized via a completion of the database $D$ to a structure which is a universal model of the set of TGDs $\Omc$ and $D$, called \emph{chase} \cite{MaMS79,DeNR08,JoKl84}. We will denote with $\chase{\Omc}{D}$ the chase of $\Omc$ w.r.t. $D$. Then, $D \models Q(\abf)$ iff $\chase{\Omc}{D} \models q(\abf)$. 

There are several variants of the chase; here we describe the  \emph{oblivious chase}. Let $(\ch{k}{\Omc}{D})_{k \geq 0}$ be a sequence of structures such that $\ch{0}{\Omc}{D}=D$. Then, for every $i>0$, $\ch{i}{\Omc}{D}$ is obtained from $\ch{i-1}{\Omc}{D}$ by considering all homomorphisms $h$ from the body of some tgd $\phi(\xbf, \ybf) \to \exists \zbf \psi(\xbf, \zbf)$ in $\Omc$ to $\ch{i-1}{\Omc}{D}$ s.t. at least one atom from $\phi$ is mapped by $h$ into a fact from $\ch{i-1}{\Omc}{D} \setminus \ch{i-2}{\Omc}{D}$, and adding to $\ch{i}{\Omc}{D}$ all facts obtained from atoms in $\psi(\xbf, \zbf)$ by replacing each $x \in \xbf$ with $h(x)$ and each $z \in \zbf$ with some fresh constant. Then, $\chase{\Omc}{D}=\bigcup_{k \geq 0} \ch{k}{\Omc}{D}$. Note that $\chase{\Omc}{D}$ might be infinite. 

%

A TGD with a body atom which has as terms all its universally quantified variables is said to be \emph{guarded}. The language of guarded TGDs will be denoted as GTGD. Unlike evaluation of OMQs based on unrestricted TGDs, evaluation of OMQs from (GTGD, UCQ) is decidable \cite{journals/tods/BourhisMMP16}. 
 By further restricting guarded tgds to rules with universally quantified variables only, one obtains the ontology language \emph{guarded Datalog (GDLog)}. For every GDLog ontology $\Omc$ and every database $D$, $\chase{\Omc}{D}$ is finite and, furthermore for every fact $r(\abf) \in \chase{\Omc}{D}$, there exists a guarded set $\abf'$ over $D$ such that $\abf \subseteq \abf'$. 

\medskip

\textbf{Parameterized Complexity.}
For $\Sigma$ some finite alphabet, a \emph{parameterized problem} is a tuple $(P,\kappa)$, where $P \subseteq \Sigma^{*}$ is a problem, and $\kappa: \Sigma^* \to \mathbb{N}$ is a \PTime~computable function called the \emph{parameterization} of $P$. Such a parameterized problem is \emph{fixed-parameter tractable} if there exists an algorithm for deciding $P$ for an input $x \in  \Sigma^*$ in time $f(\kappa(x))poly(|x|)$, where $f$ is a computable function and $poly$ is a polynomial. The class of all fixed-parameter tractable problems is denoted as \FPT. 

Given two parameterized problems $(P_1, \kappa_1)$ and $(P_2, \kappa_2)$ over alphabets $\Sigma_1$ and $\Sigma_2$, an \emph{fpt-reduction} from $(P_1, \kappa_1)$ to $(P_2, \kappa_2)$ is a function $R:\Sigma^*_1 \to \Sigma^*_2$ with the following properties: 
\begin{enumerate}
	\item $x \in P_1$ iff  $R(x) \in P_2$, for every $x \in \Sigma^*_1$, 
	\item there exists a computable function $f$ such that $R(x)$ is computable in time $f(\kappa_1(x))poly(|x|)$, 
	\item there exists a computable function $g$ such that $\kappa_2(R(x)) \leq g(\kappa_1(x))$, for all $x \in \Sigma^*_1$. 
\end{enumerate}
Downey and Fellows \cite{DoFe95I} defined a hierarchy of parameterized complexity classes $\Wp{0} \subseteq \Wp{1} \subseteq \Wp{2} \dots$, where $\Wp{0}=\FPT$ and each inclusion is believed to be strict. Each class $\Wp{i}$, with $i \geq 0$, is closed under fpt-reductions. 

A class of interest for us is $\Wp{1}$ as under the assumption that $\FPT \neq \Wp{1}$, it is possible to establish intractability results (non-membership to \FPT)\ for parameterized problems. A well-known $\Wp{1}$-complete problem is the parameterized $k$-clique problem, where the parameter is $k$: for an input $(G, k)$, with $G$ a graph and $k \in \mathbb{N}^*$ it asks whether $G$ has a $k$-clique. An even stronger assumption than $\FPT \neq \Wp{1}$ is the \emph{Exponential Time Hypothesis}: it states that 3-SAT with $n$ variables cannot be decided in $2^{o(n)}$ time \cite{DBLP:journals/jcss/ImpagliazzoPZ01}. The assumption is standard in the parameterized complexity theory and can be used as well to establish intractability results. 

\medskip
\textbf{Structural Measures: Treewdith, Submodular Width.}
A \emph{hypergraph} is a pair $H=(V,E)$ with $V$ a set of \emph{nodes} and $E \subseteq 2^V \setminus \{ \emptyset \}$ a set of \emph{edges}.  A \emph{tree decomposition} of $H$ is a pair $\delta = (T_\delta, \chi)$, with $T_\delta = (V_\delta,E_\delta)$ a tree, and $\chi$ a labeling function $V_\delta \ra 2^{V}$ such that: 
\begin{enumerate}
	
	\item $\bigcup_{t \in V_\delta} \chi(t) = V$.
	
	\item If $e \in E$, then $e \subseteq \chi(t)$ for some $t \in V_\delta$. 
	
	\item For each $v \in V$, the set of nodes $\{t \in V_\delta \mid v \in \chi(t)\}$ induces a connected subtree of $T_\delta$.
\end{enumerate}
The \emph{treewidth} of $H$, $\mn{TW}(H)$, is the smallest $k$ such that there exists a tree decomposition $(T_\delta,\chi)$ of $H$, with $T_\delta=(V_\delta,E_\delta)$, such that for every $t \in V_\delta$, $|\chi(t)| \leq k$. A function $f: 2^V \rightarrow \mathbb{R}_{\geq 0}$ is \emph{submodular} if $f(X)+f(Y) \geq f(X \cap Y) +f(X\cup Y)$. 
It is \emph{edge-dominated} if  $f(e) \leq 1$ for all \mbox{$e \in E$} .
The \emph{submodular width} of $H$, $\mn{SMW}(H)$, is the smallest $k$ such that for every monotone submodular edge-dominated function $f$, for which $f(\emptyset)=0$, there exists a tree decomposition $(T_\delta,\chi)$ of $H$, with $T_\delta=(V_\delta,E_\delta)$, such that $f(\chi(t)) \leq k$ for all $t \in V_\delta$. 

Every relational structure $I$ has an associated hypergraph $H_I$ whose vertices are the constants of $I$ and whose edges are the sets of constants from tuples $\abf$, for every atom $R(\abf)$ of $I$. As such, the structural measures on hypergraphs can be lifted to relational structures, CQs, UCQs, and OMQs. Let $\mn{X}$ range over $\{\mn{TW}, \mn{SMW}\}$. 
 Then, for a relational structure $I$, $\mn{X}(I)=\mn{X}(H_I)$. For a CQ $q$, $\mn{X}(q)= \mn{X}(H_q)$, while for a UCQ $q'$, $\mn{X}(q')=\mn{max}_{q \mbox{ is a CQ in }q'} (X(q))$. Finally, for an OMQ $Q=(\Omc, \Sbf, q)$, with $q$ a UCQ, $\mn{X}(Q)=\mn{X}(q)$. 

\section{Normalizing OMQs: Characteristic Databases and Covers}
\label{sec:cover}

As we have seen in the Introduction, \emph{equivalence-based measures} frequently play a role in characterizations of efficient (fpt) evaluation. Following \cite{DBLP:journals/sigmod/BarceloPR17,BFLP19,BDFLP-PODS20}, we refer to such measures as \emph{semantic measures}. In particular, for an OMQ $Q \in (\mathcal{L}, \text{UCQ})$, with $\mathcal{L}$ an ontology language, and a structural measure $\mn{X} \in \{\mn{TW}, \mn{SMW}\}$, 
the semantic $\mn{X}$-width $k$ of $Q$ is the smallest $k$ such that there exists an OMQ $Q'$ from  $(\mathcal{L}, \text{UCQ})$ with $Q \equiv Q'$ and $\mn{X}(Q')=k$.  A class $\class{Q}$ of OMQs has \emph{bounded semantic $\mn{X}$-width} if there exists some $k>0$ such that every OMQ in $\class{Q}$ has semantic $\mn{X}$-width at most $k$. However, in order to establish such characterizations, an important issue is finding witnesses of (bounded) semantic measures, i.e. problems of low syntactical measures which are equivalent to the original ones.

In the case of CQs, cores serve as witnesses for semantic treewidth \cite{Grohe07} and also for semantic submodular width \cite{DBLP:conf/ijcai/ChenGLP20}. Thus, as concerns classes of CQs of bounded semantic $\mn{X}$-width, with $\mn{X} \in \{\mn{TW}, \mn{SMW}\}$, the class of cores of CQs from the original class serves as a witness, i.e. it has actual bounded $\mn{X}$-width. 


However, for OMQs based on UCQs, resorting to cores of CQs in UCQs does not necessarily lead to witnesses of low width. The ontology also plays a role in lowering semantic measures. For examples of this phenomenon 
as concerns semantic treewidth, see \cite{BFLP19,BDFLP-PODS20}. Here, we show how the ontology influences semantic submodular width: 

\begin{example}
	\label{ex:cover}
	For $R$ a binary relational symbol, $i \in \mathbb{N}$, with $i>1$, and $\xbf_i$ an $i$-tuple of variables, we denote with $\psi_i^R(\xbf_i)$ the formula $R(x_1,x_2) \wedge R(x_1, x_3) \wedge \dots \wedge R(x_{i-1}, x_i)$, i.e. the hyper-graph associated to $\psi_i^R$ is the $i$-clique. Let $T$ be a binary relational symbol and $\class{Q}$ be the class of OMQs $(Q_i)_{i>1}$, with $Q_i=(\Omc_i, \Sbf_i, q_i)$, where: 
\begin{center}
\begin{tabular}{lllll}
	$\Omc_i=\{S_i(\xbf_i) \to \psi_i^{R}(\xbf_i)\}$ &&
	$\Sbf_i=\{S_i, T\}$ &&
	$q_i=\exists \xbf_i\ \psi_i^{R}(\xbf_i) \wedge  \psi_i^{T}(\xbf_i)$
\end{tabular}
\end{center}	
Then $\class{Q}$ has unbounded submodular width. As for every $i>1$, $H_{D[q_i]}$ is the $i$-clique, every tree decomposition $(T, \chi)$ of $H_{D[q_i]}$, with $T=(V, E)$, must contain some node $t \in V$ such that $\chi(t)=\xbf_i$. Let  $f:2^{\xbf_i} \to \mathbb{R}_{\geq 0}$ be the monotone submodular function $f(X)=|X|/2$. Then $f$ is also edge-dominated with respect to $D[q_i]$, and its minimum over all tree decompositions of $H_{D[q_i]}$ is $i/2$. Thus, $\mn{SMW}(Q_i) \geq i/2$, for every $i>1$.

On the other hand, for every $i >1$, $R$ is not part of the schema $\Sbf_i$ and the only way to derive it is using the unique tgd from $\Omc_i$. Thus, every $\Sbf_i$-database $D_i$ such that $D_i \models Q_i$ must contain an atom of the form $S_i(\cbf_i)$, where $\cbf_i$ is an $i$-tuple of constants. Then, for every $i>0$, $Q_i$ is equivalent to the OMQ $Q'_i=(\Omc_i, \Sbf_i,q'_i)$, with $q'_i=\exists \xbf_i\ S_i(\xbf_i) \wedge \psi_i^{R}(\xbf_i) \wedge \psi_i^{T}(\xbf_i)$.

 Let $\class{Q}'$ be the class of OMQs $(Q'_i)_{i>1}$. As for every $i>1$, $q'_i$ is guarded, i.e. it contains an atom $S_i(\xbf_i)$ which has as terms $\var{q'_i}$, it follows that $\mn{SMW}(q'_i) \leq 1$: this is due to the fact that only edge-dominated functions are considered when defining the submodular width and thus the guard ensures that for every such function $f$, $f(\xbf_i) \leq 1$. Thus, $\class{Q}'$ has bounded submodular width and $\class{Q}$ has bounded semantic submodular width.  
	
\end{example}

Example~\ref{ex:cover} shows how submodular width can be lowered by adding extra atoms to CQs in the original OMQ. This cannot not the case for treewidth, as it is a monotonic measure. In this section, we normalize OMQs from (GDLog, UCQ) by extending (images of) CQs in the original OMQ with facts occurring in databases which entail the OMQ. We obtain equivalent OMQs called \emph{covers}. The purpose of the added atoms is to provide guards for atoms in CQs and to potentially lower submodular width, as in Example \ref{ex:cover}. Intuitively, by adding such guards, the submodular width is decreased as the space of edge-dominated submodular functions is shrunk. 

At the same time, when constructing covers, we do not want to add too many atoms: adding cliques of unbounded size  (without a guard) would obviously not decrease submodular width. As such, we will use as the basis for the construction only certain databases which entail an OMQ, which are small w.r.t. the homomorphism order and which we call \emph{(extended) characteristic databases}. As next example shows, for OMQs with restricted database schemas, there might be no minimal databases w.r.t. the homomorphism order which entail the OMQ. This is a side-effect of the fact that the homomorphism order over (core) databases which entail an OMQ is in general not well-founded, and the fact that due to schema restrictions, the canonical databases of CQs occurring in the OMQ might not be valid databases.

\begin{example}
	\label{ex:notwellfounded}
	
	Let $Q=(\Omc, \Sbf, q)$ be the following OMQ from (GDLog, UCQ): 
\begin{align*}
\Omc&=\{A(x), R(x,y) \to B(y), B(x), R(x,y) \to B(y)\} \\
\Sbf&=\{A, C\} \\
q&=\exists x \ B(x) \wedge R(x,x) \wedge C(x)
\end{align*}
	

Then, for every $n \in \mathbf{N}$, let $D_n$ be the $\Sbf$-database: 

$$\{A(x_0), C(x_{n}), R(x_n,x_n)\} \cup \{R(x_i, R(x_{i+1})) \mid 0 \leq i < n\}$$

It can be verified that for every $n \in \mathbf{N}$: $D_n \models Q$, $D_n$ is a core, and $D_{n+1} \to D_n$. Thus, the homomorphism order over core structures which entail $Q$ is not well-founded. Furthermore, every $\Sbf$-database $D$ which entails $Q$ has the property that there exists some $n \in \mathbf{N}$ such that $D_n \to D$. Thus, $D$ is not minimal w.r.t. $\to$. This contrasts to the case where the database schema is \emph{full}, i.e. there are no restrictions on the relational symbols which migth occur in a database: in that scenario we could identify such a minimal database simply by considering $D[q]$. 
	
\end{example}

As explained earlier, we will identify a class of databases which is good enough for our purposes (small w.r.t. the homomorphism order in a certain sense) by successive refinements of the set of databases which entail an OMQ. These refinement steps are described in  Sections~\ref{subsec:qi},~\ref{subsec:unravel}, and~\ref{subsec:diver}. Section~\ref{subsec:charac} brings all these concepts together to define (extended) characteristic databases and covers.

\subsection{Query Initial Databases}
\label{subsec:qi} 

In our quest to refine databases which entail a query we start by considering how CQs map into the chase of a given database. In the following, a \emph{contraction} of a CQ $q$ is a CQ obtained from $q$ by variable identification. For a CQ $q$ and an instance $I$, we write $q \rightarrow^{io} I$ if $q \rightarrow I$ and every homomorphism from $q$ to $I$ is injective.

\begin{definition}
	\label{def:qi-initial}
	For $Q=(\Omc, \Sbf, q)$ from $(\text{GDLog}, \text{UCQ})$ and $D$ an $\Sbf$-database such that $D \models Q$, we say that $D$ is \emph{query-initial (qi) w.r.t. $Q$} if for every \Sbf-database $D'$ such that $D' \rightarrow D$ and $D' \models Q$, and every contraction $p$ of some CQ in $q$ it is the case that: $p \rightarrow^{io} \chase{\Omc}{D'}$ iff $p \rightarrow^{io} \chase{\Omc}{D}$.
\end{definition}

Thus, qi databases entail an OMQ and are minimal w.r.t. the set of contractions which map injectively only into their chase. They are related to injectively only databases which have been introduced in \cite{BFLP19}. By a simple induction argument, it can be shown that:

\begin{lemma}
	\label{lem:io-prop}
	Let $Q$ be an $\Sbf$-OMQ from $(\text{GDLog}, \text{UCQ})$, and $D$  an $\Sbf$-database  with $D \models Q$. If $D$ is qi w.r.t. $Q$, then for every database $D' \to D$ such that $D' \models Q$, $D'$ is also qi w.r.t. $Q$. Otherwise, there exists a database $D' \to D$ such that $D' \models Q$ and $D'$ is qi w.r.t. $Q$.
\end{lemma}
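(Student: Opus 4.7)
The plan is to dispatch the two clauses separately. For the downward-closure clause, my approach is a double application of the qi-condition at $D$. Given $D$ qi, $D' \rightarrow D$, and $D' \models Q$, I would pick any $D'' \rightarrow D'$ with $D'' \models Q$. Transitivity gives $D'' \rightarrow D$, so the qi-condition at $D$ yields both $p \rightarrow^{io} \chase{\Omc}{D''} \Leftrightarrow p \rightarrow^{io} \chase{\Omc}{D}$ and $p \rightarrow^{io} \chase{\Omc}{D'} \Leftrightarrow p \rightarrow^{io} \chase{\Omc}{D}$ for every contraction $p$ of a CQ in $q$. Chaining the two equivalences yields the qi-condition at $D'$.

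For the second clause I would set up a well-founded induction on $|A(D)|$, where
\[
A(D) := \{p \mid p \text{ is a contraction of a CQ in } q \text{ with } p \rightarrow \chase{\Omc}{D}\}.
\]
Since $q$ has finitely many variables, $A(D)$ is a subset of a fixed finite set; and since $D' \rightarrow D$ implies $\chase{\Omc}{D'} \rightarrow \chase{\Omc}{D}$ by functoriality of the GDLog chase, $A(D') \subseteq A(D)$ whenever $D' \rightarrow D$. The inductive step I would prove is: if $D \models Q$ is not qi, there exists $D' \rightarrow D$ with $D' \models Q$ such that $A(D') \subsetneq A(D)$. Applying the induction hypothesis to such a $D'$ produces a qi $D^{*} \rightarrow D'$ with $D^{*} \models Q$, and $D^{*} \rightarrow D$ follows by composition.

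The one delicate step, which I expect to be the main obstacle, is the strict-drop claim. Non-qi-ness of $D$ yields $D' \rightarrow D$ with $D' \models Q$ and a contraction $p$ on which the biconditional fails, and I would split on which side fails. If $p \rightarrow^{io} \chase{\Omc}{D}$ but $p \not\rightarrow^{io} \chase{\Omc}{D'}$, one cannot have a non-injective hom $p \rightarrow \chase{\Omc}{D'}$: such a hom factors through a proper contraction $p'$, whence $p' \rightarrow \chase{\Omc}{D'} \rightarrow \chase{\Omc}{D}$ composes with the contraction map into a non-injective hom $p \rightarrow \chase{\Omc}{D}$, contradicting $p \rightarrow^{io} \chase{\Omc}{D}$; so $p \not\rightarrow \chase{\Omc}{D'}$ and $p \in A(D) \setminus A(D')$. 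If instead $p \rightarrow^{io} \chase{\Omc}{D'}$ but $p \not\rightarrow^{io} \chase{\Omc}{D}$, some hom $p \rightarrow \chase{\Omc}{D}$ factors through a proper contraction $p'$, so $p' \in A(D)$; moreover $p' \notin A(D')$, else composition with the contraction map would give a non-injective hom $p \rightarrow \chase{\Omc}{D'}$ against $p \rightarrow^{io} \chase{\Omc}{D'}$. Either way $A(D) \setminus A(D') \neq \emptyset$, so $A(D') \subsetneq A(D)$ and the induction closes.
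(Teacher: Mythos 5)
Your proof is correct and fleshes out exactly the ``simple induction argument'' that the paper merely alludes to (no detailed proof of this lemma is given there): downward closure of the qi property by chaining two instances of the defining biconditional at $D$, and existence via a well-founded induction on the finite set of contractions mapping into the chase, using that for GDLog a homomorphism $D' \to D$ lifts to one from $\chase{\Omc}{D'}$ to $\chase{\Omc}{D}$. The only step worth making explicit is that in your second case a homomorphism $p \to \chase{\Omc}{D}$ is guaranteed to exist (compose the injective one into $\chase{\Omc}{D'}$ with $\chase{\Omc}{D'} \to \chase{\Omc}{D}$), so the failure of $\rightarrow^{io}$ there is necessarily witnessed by a non-injective homomorphism, which is what your factorization through a proper contraction requires.
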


\begin{example}
	\label{ex:ioDB}
	Let $Q=(\Omc, \Sbf, q)$ be the OMQ with: 
	\begin{align*}
	\Omc&=\{U(x,y,z) \wedge V(x,z) \to T(x,z), W(x,y,z) \to S(y,z)\} \\
	\Sbf&=\{R,U,V, W\}; \mbox{             } 
	q=\exists x,y,z\ R(x,y) \wedge S(y,z) \wedge T(z,x)
	\end{align*}
	
	Also, let $D_1$ and $D_2$ be the two $\Sbf$-databases: 
	$$D_1=\{R(a,b),W(d,b,a), U(a,d,a), V(a,a)\}$$	
	$$D_2=\{R(a,b),W(d,b,c), U(c,d,a),V(c,a) \}$$
	It can be checked that $D_2 \rightarrow D_1$.  Let $q'$ be the contraction of $q$ obtained by identification of $x$ and $z$: $$q'=\exists x,y \ R(x,y) \wedge S(y,x)\wedge T(x,x)$$ 
	Then $q' \rightarrow^{io} \chase{\Omc}{D_1}$, but $q' \not \rightarrow \chase{\Omc}{D_2}$. Thus, $D_1$ is not qi w.r.t. $Q$. However the only contraction which maps (injectively only) into $\chase{\Omc}{D_2}$ is $q$. Thus, $D_2$ is qi w.r.t. $Q$. 
	

%
\end{example}

\subsection{Guarded Unravelings}
\label{subsec:unravel}

Another concept which will be useful for defining characteristic databases is that of \emph{guarded unraveling}. The operation was first introduced in \cite{Gr99b}. A definition closer to our purposes is provided in \cite{BDFLP-PODS20}. For completeness, we repeat it here.

For an \Sbf-database~$D$ and a guarded set~$\abf$ over $D$, we construct a structure $I^{\abf}$, the \emph{guarded unraveling of $D$ at $\abf$}, in parallel with a tree decomposition $(T, \chi)$ of $I^{\abf}$, with $T=(V, E)$ as follows. $V$ is the set of all sequences of the form $\abf_0 \ldots \abf_n$, where $\abf_0=\abf$, and $\abf_1, \ldots, \abf_n$ are maximal guarded sets in $D$ such that $\abf_i \cap \abf_{i+1} \neq \emptyset$, and $\abf_i \neq \abf_{i+1}$, for every $0 \leq i <n$. For such a sequence $s$, we denote with $\mn{tail}(s)$ its last component $\abf_n$. For every $s_1, s_2 \in V$, it is the case that $(s_1, s_2) \in E$ iff $s_2=s_1\bbf$, for some maximal guarded set $\bbf$ in $D$. We further consider a set of constants $S$ such that $\abf \subseteq S$ and such that $S$ contains an infinite amount of copies for every constant $a \in \adom{D}$. We define $\chi:V \to 2^S$ in parallel with $I^{\abf}$ inductively. 

We start by setting $\chi(\abf)=\abf$ and initializing $I^{\abf}$ as $D|_{\abf}$. Then, for every node $t \in V$, such that $\chi(t)$ is defined, and for every $t' \in V$ such that $(t,t') \in E$, let $\mn{tail}(t)=\bbf_1$ and $\mn{tail}(t')=\bbf_2$. Also let $\bbf'$  be the set of constants obtained from $\bbf_2$ by replacing each constant $b \in \bbf_2 \setminus \bbf_1$ with a fresh copy from $S$ and 
let $D'$ be a copy of the database $D|_{\bbf_2}$ in which the constants from $\bbf_2$ have been replaced with their counterparts from $\bbf'$. Set $\chi(t')=\bbf'$ and update $I^{\abf}$ as $I^{\abf} \cup D'$. Note that $\bbf'$ is a guarded set in $D'$ (as $\bbf_2$ is a guarded set in $D|_{\bbf_2}$) and thus also in $I^{\abf}$. Thus, for  each $t \in V$, $\chi(t)$ is a guarded set. We say that the tree decomposition $(T, \chi)$ is a \emph{guarded tree decomposition}. 

Guarded unravelings have the property that for every OMQ  $Q'$ from (GDLog, AQ), every database $D$, every guarded set $\abf$ in $D$, and every tuple $\abf' \subseteq \abf$: $D \models Q'(\abf')$ implies $I^{\abf} \models Q'(\abf')$. It can also easily be seen that, due to the existence of a guarded tree decomposition, the submodular width of $I^{\abf}$ is 1. As such, for a given OMQ $Q=(\Omc, \Sbf, q)$ from (GDLog, CQ) and some $\Sbf$-database $D$ such that $D \models Q$, we will use them to disentangle parts of $D$ which are needed to entail specific atoms from some CQ in $q$, i.e. we will replace some parts of $D$ with corresponding guarded unravelings. 

In general, for a guarded set $\abf$ in $D$, $I^{\abf}$ is infinite and thus cannot be used straightaway. By compactness, there exists a finite subset (a database) $D^{\abf}$ which fulfills the same property. As we want to unravel databases which entail OMQs as much as possible, we place a stronger requirement on $D^\abf$: for every Boolean sub-query $p'$ of some CQ $p$ in $q$, whenever $I^{\abf} \models (\Omc, \Sbf, p')$ we require that $D^{\abf} \models (\Omc, \Sbf, p')$. As this requirement is relative to $Q$, we will refer to  $D^\abf$ as the \emph{guarded unraveling of $D$ at $\abf$ w.r.t. $Q$}. However, $Q$ will be clear in most cases from the context so it will be omitted.

\subsection{Diversifications}
\label{subsec:diver}

A \emph{diversification} of a database is essentially a database which maps into the original one in a specific way. For a function $f$ and $A$ a subset of its domain, we denote with $f|_A$ the restriction of $f$ on $A$. Also, we denote with  $\mn{ran}(f)$ the range of $f$. A homomorphism $h$ from a structure $A$ to a structure $B$ is said to be \emph{injective on guarded sets (i.g.s.)} if $h|_\abf$ is injective, for every guarded set $\abf$ in $A$. For a database $D$, a constant $c \in \adom{D}$ is \emph{isolated in $D$} if it occurs in a single fact in $D$. The \emph{kernel} of a database $D$, $\mn{ker}(D)$, is the set of non-isolated constants in $D$.

\begin{definition}
A \emph{diversification} of a database $D_0$ is a tuple $(D,\uparrow)$, where $D$ is a database which maps into $D_0$ via the homomorphism $\uparrow$ which is i.g.s. and for which $\uparrow_{|\mn{ker}(D)}$ is injective. We write $D \preceq D_0$, whenever there exists a diversification $(D, \uparrow)$ of $D_0$.


\end{definition}

We observe that for every database $D \subseteq D_0$, $(D, \uparrow)$ is a diversification of $D_0$, when $\uparrow$ is the identity function on $\adom{D}$. It can also be easily checked that $\preceq$ is transitive: for databases $D_0$, $D_1$, and $D_2$ and functions $\uparrow$ and $\downarrow$ such that $(D_1, \uparrow)$ is a diversification of $D_0$ and $(D_2, \downarrow)$ is a diversification of $D_1$, $(D_2, \uparrow \circ \downarrow)$ is a diversification of $D_0$, and thus $D_2 \preceq D_0$. 

\begin{example}
	\label{ex:div}
	
	Let $Q=(\Omc, \Sbf, q)$ and $D_2$ be as in Example~\ref{ex:ioDB}. Also, let $D$ be the $\Sbf$-database: $\{R(a,b), W(d,b,c), V(c,a)\}$ and $\mn{id}$ be the identity mapping on $\adom{D_2}$. As $\mn{ker}(D)=\{a,b,c\}$ and $\mn{id}$ is i.g.s., it follows that $(D, \mn{id})$ is a diversification of $D_2$ and thus $D \preceq D_2$. 
\end{example}

For a database $D$ which entails an OMQ $Q$, we will use diversifications in conjunction with guarded unravelings to disentangle $D$, i.e. to obtain databases $D'$ which are smaller w.r.t. the homomorphism order and which still entail $Q$. We start with a construction which glues guarded unravelings of some database $D_0$ to a database $D$ which maps into $D_0$ via some homomorphism $\uparrow$ which is i.g.s. (note that in this case $(D, \uparrow)$ might not be a diversification of $D_0$). In this case, we denote with $\mn{ext}_Q(D, \uparrow, D_0)$ the database obtained from $D$ by adding for each maximal guarded set $\abf$ in $D$ the database $D_0^{\abf}$ obtained from the guarded unraveling $D_0^{\uparrow(\abf)}$ of $D_0$ at $\uparrow(\abf)$ w.r.t. $Q$, by renaming the constants in ${\uparrow(\abf)}$ to those in~$\abf$. When $Q$ is clear from the context, we will write $\mn{ext}(D, \uparrow, D_0)$.



\begin{definition}
	For $Q$ an $\Sbf$-OMQ from (GDLog, UCQ) and $D_0$ an \Sbf-database with $D_0 \models Q$, $\mn{div}(D_0, Q)$ is the set of diversifications $(D, \uparrow)$ of $D_0$ for which $\mn{ext}(D, \uparrow, D_0) \models Q$. 
	
	A diversification $(D, \uparrow)$ from $\mn{div}(D_0, Q)$ is said to be \emph{minimal w.r.t. $Q$} if $D$ is a core and there is no other diversification $(D', \downarrow)$ from $\mn{div}(D_0, Q)$ such that $D' \preceq D$, and $D \not \preceq D'$. The set of all minimal diversifications of $D_0$ w.r.t. $Q$ is denoted as $\mn{mdiv}(D_0, Q)$. 

	\end{definition}

Intuitively, for a minimal diversification $(D, \uparrow)$ of $D_0$ w.r.t. $Q$ and the ensuing database $\mn{ext}(D, \uparrow, D_0)$, the $D$-part of $\mn{ext}(D, \uparrow, D_0)$ is important for preserving some part of the underlying structure (hypergraph) of $D_0$ needed to entail a CQ. The guarded unravelings which were added to $\mn{ext}(D, \uparrow, D_0)$ provide the necessary information to entail individual atoms in the query.

\begin{example}
	\label{ex:unravel}
	
	Let $Q=(\Omc, \Sbf, q)$ and $D_2$ be as in Example~\ref{ex:ioDB}. Also let $(D, \mn{id})$ be the diversification of $D_2$ introduced in Example~\ref{ex:div} and let $D^+=\mn{ext}(D, \mn{id}, D_2)$ be the database obtained from $D$ by adding guarded unravelings of $D_2$. We do not explicitly construct the guarded unravelings, but note that $D_2^{(a,c)}$ will contain a fact of the form $U(c, d', a)$, where $d'$ is a fresh copy of $d$. Let $D'=D \cup \{U(c, d', a)\}$. Then $D' \subseteq D^+$ and $D' \models Q$, thus $D^+ \models Q$. Thus, $D \in \mn{div}(D_2, Q)$. 
	
	On the other hand, it can be verified that for every database $D_3$ such that $D_3 \preceq D$, but $D \not \preceq D_3$, there exists no homomorphism $\downarrow$ from $D_3$ to $D_2$ such that $(D_3, \downarrow) \in \mn{div}(D_2,Q)$ (in other words there is no way to add guarded unravelings of $D_2$ to $D_3$ and still entail $Q$ if $D \not \preceq D_3$). To see why this is the case, observe that $D_3 \preceq D$ and $D \not \preceq D_3$, implies that $D_3$ is obtained from $D$ either by dropping facts or renaming some constant in an existing fact which was originally non-isolated (disjoining facts). Thus,  $(D, \uparrow) \in \mn{mdiv}(D_2, Q)$. 
	
	We look now at how $q$ maps into $\chase{\Omc}{D^+}$. There exists a homomorphism $h$ which maps $(x,y,z)$ into $(a,b,c)$. Thus, $h$ maps every guarded set in $q$ into some guarded set in $D$ -- in other words, $H_D$, the hypergraph of $D$, can be seen as a skeleton for $q$. At the same time, the atoms added by the guarded unravelings (in particular $U(c, d', a)$) allow the entailment of particular atoms from $q$ (like $T(x,z)$). 
	\end{example}


The following technical lemma shows how given a homomorphism $h$ from a CQ $p$ in an OMQ $Q$ to the chase of a database of the form $\mn{ext}(D, \uparrow, D_0)$, it is possible to construct a diversification of $D$, $(D', \downarrow)$, such that the database obtained by extending $D'$ with guarded unravelings of $D_0$ according to the composition homomorphism $\uparrow \circ \downarrow$ from $D'$ to $D$ still entails the OMQ $Q$.

\begin{lemma}
	\label{lem:divers-constr}
	Let $Q=(\Omc, \Sbf, q)$ be an OMQ from $(\text{GDLog}, \text{UCQ})$, $D$ and $D_0$ be  $\Sbf$-databases, and $\uparrow$ a homomorphism from $D$ to $D_0$ which is i.g.s. such that  $\mn{ext}(D, \uparrow, D_0) \models Q$. Also, let $h$ be a homomorphism from some CQ $p$ in $q$ to $\chase{\Omc}{\mn{ext}(D, \uparrow, D_0)}$ and $A=\mn{ran}(h) \cap \adom{D}$. Then, there exists an $\Sbf$-database $D'$ and a homomorphism $\downarrow$ from $D'$ to $D$ such that: 
	\begin{enumerate}
		\item $\downarrow$ is the identity function on $\mn{ker}(D')$;
	    \item $(D', \downarrow)$ is a diversification of $D$;
		\item $\mn{ker}(D') \subseteq \mn{ran}(h) \cap \adom{D}$;
		\item $\mn{ext}(D', \uparrow \circ \downarrow, D_0) \models Q$.
	\end{enumerate}
	\end{lemma}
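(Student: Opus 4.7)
The plan is to construct $D'$ and $\downarrow$ explicitly, then lift $h$ to a witness of~(4) via a locality property of the chase on the unravelings attached by $\mn{ext}$. For each fact $f = r(\bbf) \in D$, put a fact $r(\bbf^f)$ into $D'$, where every $c \in \bbf \cap A$ is kept and every $c \in \bbf \setminus A$ is replaced by a fresh copy $c^f$ chosen unique to the pair (fact~$f$, constant~$c$); define $\downarrow$ to be the identity on $A$ and to send each $c^f$ back to $c$. Each $c^f$ occurs in only one fact of $D'$, so fresh copies are isolated in $D'$ and $\mn{ker}(D') \subseteq A$, giving~(3); since $\downarrow$ is the identity on $A \supseteq \mn{ker}(D')$, property~(1) is immediate; and since $\downarrow$ is clearly a homomorphism, injective on the underlying set of each fact of $D'$ (hence i.g.s.) and injective on $\mn{ker}(D')$, $(D',\downarrow)$ is a diversification of $D$, giving~(2).

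The heart of~(4) will be the following locality equality, for each maximal guarded set $\abf$ of $D$, writing $U_\abf$ for the attached unraveling in $\mn{ext}(D, \uparrow, D_0)$:
\[
\chase{\Omc}{\mn{ext}(D, \uparrow, D_0)}\big|_{\adom{U_\abf}} \;=\; \chase{\Omc}{U_\abf}.
\]
Monotonicity gives $\supseteq$. For $\subseteq$, I will first check $\mn{ext}(D, \uparrow, D_0)|_{\adom{U_\abf}} = U_\abf$: since $\uparrow$ is a homomorphism injective on $\abf$, the restriction $D|_\abf$ embeds into $D_0|_{\uparrow(\abf)}$ and so into $U_\abf$ via the unraveling's root initialisation, and the restriction of any other $U_{\abf'}$ to $\abf \cap \abf'$ likewise lies in $U_\abf$ by the same renaming argument. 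Then, for a chase fact $R(\bbf)$ with $\bbf \subseteq \abf$, I will transfer it through the canonical homomorphism $\widetilde{\Phi} : \chase{\Omc}{\mn{ext}(D, \uparrow, D_0)} \to \chase{\Omc}{D_0}$ (obtained by extending $\uparrow$ on $\adom{D}$ with the un-unraveling map of each $U_{\abf'}$) to $R(\uparrow(\bbf)) \in \chase{\Omc}{D_0}|_{\uparrow(\abf)}$, and invoke the AQ-preservation property of the finite unraveling $D_0^{\uparrow(\abf)}$ to place this in $\chase{\Omc}{D_0^{\uparrow(\abf)}}$, which after renaming is $\chase{\Omc}{U_\abf}$. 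An induction on chase derivation then handles facts whose underlying set additionally contains fresh constants of $U_\abf$: such constants only appear inside $U_\abf$, so the guarded set of the rule application and hence all body atoms stay within $\adom{U_\abf}$. The same argument applied to $\mn{ext}(D', \uparrow \circ \downarrow, D_0)$ gives the analogous equality for each maximal guarded set $\abf^*$ of $D'$, and $U_{\abf^*}$ is canonically isomorphic to $U_{\downarrow(\abf^*)}$ via a renaming that fixes $A$ pointwise.

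Finally, define $h'(x) = h(x)$ when $h(x) \in A$; when $h(x) \notin A$, the constant $h(x)$ is a fresh element of a uniquely determined $U_{\abf_x}$, so fix once per $\abf$ a choice of maximal guarded set $\abf^*_x$ of $D'$ with $\downarrow(\abf^*_x) = \abf_x$ (one exists because every maximal guarded set of $D$ is the underlying set of some fact, which induces a corresponding fact in $D'$), and set $h'(x)$ to be the image of $h(x)$ under the isomorphism $U_{\abf_x} \cong U_{\abf^*_x}$. The definition is consistent: two atoms of $p$ sharing a variable $x$ with $h(x) \notin A$ have chase images inside a common guarded set containing $h(x)$, hence inside the same $U_{\abf_x}$; on $A$ the two conventions agree. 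For each atom $R(\xbf)$ of $p$, $R(h(\xbf))$ lies inside $\adom{U_\abf}$ for some $\abf$, so by locality $R(h(\xbf)) \in \chase{\Omc}{U_\abf}$; the isomorphism transports it to $R(h'(\xbf)) \in \chase{\Omc}{U_{\abf^*}}$, and the dual locality places it in $\chase{\Omc}{\mn{ext}(D', \uparrow \circ \downarrow, D_0)}$. Hence $h'$ witnesses~(4). The hard part will be the locality equality: its base case needs the careful interplay of $\widetilde{\Phi}$, the i.g.s.~hypothesis on $\uparrow$, and the AQ-preservation of the finite unraveling, while the inductive step must handle chase derivations that mix facts from $D$ with facts of several $U_{\abf'}$ meeting at shared $\abf$-constants.
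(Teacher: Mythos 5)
Your construction is correct and follows essentially the same route as the paper's own proof: you build $D'$ by keeping the $h$-hit constants of $D$ and replacing the rest by per-fact fresh (hence isolated) copies, take $\downarrow$ to be the identity on the hit part, and re-route variables landing in unraveling-fresh constants through isomorphisms between corresponding unraveling copies, justified by the atomic-consequence preservation of guarded unravelings. Your explicit ``locality equality'' is just a cleaner packaging of the paper's claim that the maps $\kappa_i$ and $\downarrow$ restricted to guarded sets induce isomorphisms of the correspondingly restricted chases, and the minor deviations (retaining all facts of $D$ rather than only the hit ones, and using the per-fact copy of a maximal guarded set in place of the paper's dedicated guard atom over $\bbf_i$) do not affect any of the four claimed properties.
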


\begin{proof}

We start by partitioning the set of variables from the CQ $p$, $\adom{p}$, into sets $A_0, A_1, \dots A_n$, such that: 
\begin{enumerate}
	\item $h(A_0) \subseteq \adom{D}$; 
	\item for every $i>0$, $h(A_i) \subseteq \adom{D^{\abf_i}_0} \setminus \adom{D}$, \item for every $i, j >0$, $i \neq j$ implies $\abf_i \neq \abf_j$.
\end{enumerate}

Thus, $A_0$ contains all variables which map into $\adom{D}$, and each $A_i$ contains a maximal set of variables which map into constants from some guarded unraveling of $D_0$ which are not from $D$.  

Further on, for every $i$, with $0 \leq 1 \leq n$, we define a set of constants $\bbf_i$ such that $\bbf_i$ is obtained from $\abf_i$ by replacing every constant which is not from $\mn{ran}(h)$ (i.e. it is not `hit' by $h$) with a fresh constant.

We now proceed to definining $D'$. We start by initializing $D'$ as $\emptyset$. For every atom $r(\cbf)$ in $D$ for which $\cbf \cap \mn{ran}(h) \neq \emptyset$, we rename all constants from $\cbf$ which are not from $\mn{ran}(h)$ as fresh constants and add the atom to $D'$. Then, for every $i$, with $1 \leq i \leq n$, there must be some atom $r(\cbf_i)$ in $D$ such that $\abf_i \subseteq \cbf_i$.  We construct a new tuple of constants $\cbf'_i$ from $\cbf_i$ as follows: constants from $\abf_i$ are replaced with constants from $\bbf_i$ and constants which are not from $\abf_i$ are freshly renamed. We add $r(\cbf'_i)$ to $D'$. 

By construction, $\adom{D} \cap \mn{ran}(h) \subseteq \adom{D'}$. All other constants from $\adom{D'}$ are fresh: this is due to the fact that all facts of the form $r(\cbf'_i)$ added to $D'$ at the step above are distinct; thus, when constructing $D'$ we do not re-use fresh constants introduced when defining the sets $b_i$ twice. Thus,  $\mn{ker}(D') \subseteq \adom{D} \cap \mn{ran}(h)$ (Point (3) of the Lemma). We define $\downarrow$ as a mapping from $\adom{D'}$ to $\adom{D}$ which is the identity on $\mn{ran}(h) \cap \adom{D}$ and which maps fresh constants to original constants for the remaining elements of $\adom{D'}$. It can be verified easily that Point (1) and Point (2) of the Lemma are met. It remains to show Point (4).

Let $g=\uparrow \circ \downarrow$ and $D^+=\mn{ext}(D', g, D_0)$. 
We construct a mapping $h'$ from $\adom{p}$ to $\adom{D^+}$ as follows: $h'(x)=\begin{cases} h(x) &\mbox{if } x \in A_0 \\
	\kappa_i(h(x)) & \mbox{if } x \in A_i \end{cases}$.

For every $i$ with $0 <i \leq n$, we define an isomorphism $\kappa_i$ from the database $D_0^{\abf_i}$, the guarded unraveling of $D_0$ added to the guarded set $\abf_i$ during the construction of $\mn{ext}(D, \uparrow, D_0)$ to the database $D_0^{\bbf_i}$, the copy of $D_0^{\abf_i}$ added to $D'$ during the construction of $D^+$, in which constants from $\abf_i$ have been replaced with constants from $\bbf_i$. Due to the property of guarded unravelings to preserve atomic consequences, it can be shown that for every $i$ with $0 <i \leq n$, $\kappa_i$ is an isomorphism also from $\chase{\Omc}{\mn{ext}(D, \uparrow, D_0)}_{|\adom{D_0^{\abf_i}}}$ to $\chase{\Omc}{D^+}_{|\adom{D_0^{\bbf_i}}}$.

Further on, due to the fact that $(D', \downarrow)$ is a diversification of $D$ and again to the property of guarded unravelings, it can be shown that for every guarded set $\bbf$ in $D'$, $\downarrow_{\bbf}$ is an isomorphism from $\chase{\Omc}{D^+}_{|\adom{D_0^{\bbf}}}$ to $\chase{\Omc}{\mn{ext}(D, \uparrow, D_0)}_{|\adom{D_0^{\downarrow(\bbf)}}}$, and thus for guarded sets $\abf$ in $D'$ such that $\abf \in \adom{D}$, $\chase{\Omc}{D^+}_{|\adom{D_0^{\abf}}}$ and $\chase{\Omc}{\mn{ext}(D, \uparrow, D_0)}_{|\adom{D_0^{\downarrow(\bbf)}}}$ are identical. By putting all this together it can be verified that $h'$ is a homomorphism from $p$ to $\chase{\Omc}{D^+}$. 
\qed

\end{proof}

\subsection{Characteristic Databases and Covers}
\label{subsec:charac}

In this section we put together the notions introduced in previous subsections to define \emph{(extended) characteristic databases} and \emph{covers} of an OMQ.

\begin{definition}
	\label{def:characteristic}
	For $Q$ an OMQ from (GDLog, UCQ),  \emph{the set of characteristic databases for $Q$} is  $\class{D}_Q=\{D \mid (D, \uparrow) \in \mn{mdiv}(D_0, Q), D_0 \mbox{ is qi w.r.t. } Q\}$. \emph{The set of extended characteristic databases for $Q$} is $\class{D}^+_Q=\{\mn{ext}(D, \uparrow, D_0) \mid (D, \uparrow) \in \mn{mdiv}(D_0, Q), D_0 \mbox{ is qi w.r.t. } Q\}$.
	\end{definition}


	

Thus, a characteristic database $D$ is the database part of a minimal diversification of some database $D_0$ which is qi w.r.t. $Q$, while an extended characteristic database is the extension of such a database $D$ with guarded unravelings of $D_0$ according to the homomorphism from the diversification. Due to the minimality condition on diversifications, characteristic databases satisfy the following properties:

\begin{lemma}
	\label{lem:mindivers-targets}
	Let $Q=(\Omc, \Sbf, q)$ be an OMQ from $(\text{GDLog}, \text{UCQ})$,  $D^+$ be an $\Sbf$-database of the form $\mn{ext}(D, \uparrow, D_0)$ from $\class{D}_Q^+$, and $h$ a homomorphism from a CQ $p$ in $q$ to $\chase{\Omc}{D^+}$. Then the following hold: 
	\begin{enumerate}
		\item  $\mn{ker}(D) \subseteq \mn{ran}(h)$;
		\item  there exists a computable function $f$ such that $|D| \leq f(|Q|)$.	\end{enumerate}

\end{lemma}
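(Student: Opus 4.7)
My plan is to prove Part~(1) by contradiction using Lemma~\ref{lem:divers-constr}, and to deduce Part~(2) from Part~(1) together with the fact that $D$ is a core.

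For Part~(1), assume towards a contradiction that some $c \in \mn{ker}(D) \setminus \mn{ran}(h)$. Applying Lemma~\ref{lem:divers-constr} to the data $D$, $D_0$, $\uparrow$, $h$ yields an $\mathbf{S}$-database $D^*$ and a homomorphism $\downarrow \colon D^* \to D$ such that $(D^*, \downarrow)$ is a diversification of $D$, $\downarrow$ is the identity on $\mn{ker}(D^*)$, $\mn{ker}(D^*) \subseteq \mn{ran}(h) \cap \adom{D}$, and $\mn{ext}(D^*, \uparrow \circ \downarrow, D_0) \models Q$. By transitivity of $\preceq$, $(D^*, \uparrow \circ \downarrow) \in \mn{div}(D_0, Q)$, and evidently $D^* \preceq D$; so to contradict the minimality of $(D, \uparrow)$ it suffices to establish $D \not\preceq D^*$.

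The crucial step is to rule out the existence of an i.g.s.\ homomorphism $g \colon D \to D^*$ that is injective on $\mn{ker}(D)$. Assuming such a $g$, the composition $\sigma = \downarrow \circ g$ is an endomorphism of the core $D$ and hence an automorphism of $D$; this makes $g$ injective on all of $\adom{D}$ and therefore injective on facts of $D$. For any $d \in \mn{ker}(D)$, $d$ belongs to at least two distinct facts of $D$, so $g(d)$ belongs to at least two distinct facts of $D^*$ and thus lands in $\mn{ker}(D^*) \subseteq \mn{ran}(h) \cap \adom{D}$. Since $\downarrow$ is the identity on $\mn{ker}(D^*)$, I conclude $\sigma(d) = g(d) \in \mn{ran}(h)$ for every $d \in \mn{ker}(D)$. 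Because $\sigma$ permutes $\mn{ker}(D)$, this forces $\mn{ker}(D) = \sigma(\mn{ker}(D)) \subseteq \mn{ran}(h)$, contradicting the choice of $c$.

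For Part~(2), fix any CQ $p$ in $q$ and homomorphism $h \colon p \to \chase{\Omc}{D^+}$, which exists because $D^+ \models Q$. Part~(1) then immediately gives $|\mn{ker}(D)| \leq |\mn{ran}(h)| \leq |\adom{p}| \leq |Q|$. To bound the total number of facts of $D$, I associate to each fact a \emph{type}, recording its relation symbol, the kernel constants occupying each kernel-occupied position, and the equivalence relation on the remaining positions that tracks which of them share an isolated constant. The number of types is bounded by a computable function of $|Q|$, since schema size, maximum arity, and $|\mn{ker}(D)|$ are all bounded by functions of $|Q|$. Moreover, if two distinct facts of $D$ had the same type, the map sending the isolated constants of one onto those of the other position by position and fixing everything else would be a non-injective endomorphism of $D$ (well-defined because isolated constants occur in exactly one fact of $D$), contradicting $D$ being a core. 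Hence distinct facts of $D$ have distinct types and $|D|$ is bounded by a computable function of $|Q|$.

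The main obstacle is the $D \not\preceq D^*$ step in Part~(1): the decisive observation is that $\sigma$ being an automorphism of the core $D$ promotes $g$ from ``injective on $\mn{ker}(D)$'' to ``injective on $\adom{D}$'' and thence to ``injective on facts'', which is precisely what forces $g(\mn{ker}(D)) \subseteq \mn{ker}(D^*)$ and lets the constraint $\mn{ker}(D^*) \subseteq \mn{ran}(h)$ propagate to the desired contradiction.
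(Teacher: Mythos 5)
Your argument is correct and follows essentially the same route as the paper: apply Lemma~\ref{lem:divers-constr} to obtain a diversification $(D^*,\downarrow)$ of $D$ lying in $\mn{div}(D_0,Q)$, invoke minimality of $(D,\uparrow)$ to get a homomorphism back from $D$ to $D^*$, and use the core property of $D$ together with the fact that $\downarrow$ is the identity on $\mn{ker}(D^*)\subseteq\mn{ran}(h)$ to conclude $\mn{ker}(D)\subseteq\mn{ran}(h)$ (the paper phrases this via $D\leftrightarrow D'$ and a cardinality comparison of kernels rather than your automorphism $\sigma$, but the content is identical). Your type-counting argument for Part~(2) is a welcome elaboration of the paper's one-line appeal to ``$D$ is a core and $|\mn{ker}(D)|$ is bounded,'' and it is sound.
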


\begin{proof}
	We start by showing Point (1). From Point (2) of Lemma~\ref{lem:divers-constr} we know that there exists a database $D'$ and a homomorphism $\downarrow$ from $D'$ to $D$ such that $(D', \downarrow)$ is a diversification of $D$. Thus, $D' \preceq D$. As $(D, \uparrow)$ is a diversification of $D_0$, it follows that $(D', \uparrow \circ \downarrow)$ is a diversification of $D_0$. From Point (4) of Lemma~\ref{lem:divers-constr} we obtain that $(D', \uparrow \circ \downarrow)$ is a diversification of $D_0$ w.r.t. $Q$. In other words, $D' \in \mn{div}(D_0, Q)$. As $D \in \mn{mdiv}(D_0, Q)$, and $D' \preceq D$, it must be the case that $D \preceq D'$ (otherwise, $D$ would not be minimal). Thus, $D \leftrightarrow D'$. 
	
	From Point (1) of Lemma~\ref{lem:divers-constr}, we know that $D'$ maps into $D$ via the function $\downarrow$  which is the identity on $\mn{ker}(D')$. Thus, $\mn{ker}(D') \subseteq \mn{ker}(D)$ (a non-isolated constant, i.e. a constant which occurs in $\mn{ker}(D')$ can only map into another non-isolated constant, thus a constant from $\mn{ker}(D)$). As $D$ is a core, it is the case that $D \rightarrow^{io} D'$, otherwise the composition of a non-injective homomorphism to $D'$ with the homomorphism $\downarrow$ from $D'$ to $D$ would lead to a non-injective endomorphism. As every homomorphism from $D$ to $D'$ maps $\mn{ker}(D)$ into $\mn{ker}(D')$ injectively, it is the case that $|\mn{ker}(D)| \leq |\mn{ker}(D')|$. Thus, $\mn{ker}(D')= \mn{ker}(D)$. Finally, from Point (3) of Lemma~\ref{lem:divers-constr} we know that $\mn{ker(D')} \subseteq \mn{ran}(h) \cap \adom{D}$, and thus $\mn{ker}(D) \subseteq \mn{ran}(h)$. 
	
	\medskip 
	
	Point (2) follows from Point (1) and from the fact that $D$ is a core: as $|\mn{ker}(D)|$ is bounded in $|Q|$, $|D|$ will be bounded in $|Q|$ as well. 	
	\qed
\end{proof}	
\medskip

We next define the \emph{cover} $Q_c=(\Omc, \Sbf, q_c)$ of an OMQ $Q=(\Omc, \Sbf, q)$ by using the set of extended characteristic databases for $Q$, $\class{D}^+_Q$. To construct CQs in $q_c$, we will consider images of CQs from $q$ in the chase of some database $D^+$ from $\class{D}^+_Q$ together with extra atoms (facts)\footnote{As we construct a CQ by conjoining the image of an original CQ into a database with more facts from the database, the distinction between facts and atoms is blurred.} from $D^+$ which will guard atoms in the CQ image. We want to include guards from $D^+$ which cover as many atoms as possible from the image of the CQ, and thus to potentially decrease submodular width as in Example~\ref{ex:cover}. Formally: 

\begin{definition}
	\label{def:cover}
	 For $Q=(\Omc, \Sbf, q)$ an OMQ from (GDLog, UCQ), its \emph{cover} is the OMQ  $Q_c=(\Omc, \Sbf, q_c)$, with $q_c$ the union of all CQs $p_c$ with $D[p_c]$ of the form $h(p) \cup S$, where:
	\begin{enumerate}
		\item  $h$ is a homomorphism from some CQ $p$ in $q_c$ to $\chase{\Omc}{D^+}$,  where $D^+$ is a database of the form $\mn{ext}(D, \uparrow, D_0)$ from  $\class{D}^+_Q$;
		\item $S$ is a minimal set of atoms such that: 
		\begin{enumerate}
			\item $D \subseteq S \subseteq D^+$;
			\item for every atom $r(\abf)$ from $h(p)$, there exists an atom $r'(\abf')$ from $S$ such that $\abf \subseteq \abf'$ and $\abf'$ is a maximal guarded set in $D^+$.
		\end{enumerate}
\end{enumerate}
\end{definition} 
 
The set of atoms $S$ in Definition~\ref{def:cover} is the set of guards added to the image $h(p)$ of a CQ $p$ in $q$. By including $D$, $S$ trivially guards every maximal set of atoms $S'$ from $h(p)$ for which $\adom{S'}$ is a guarded set in $D$. As concerns remaining atoms from $h(p)$ (those which map into the guarded unraveling part of $D^+$), we note that for every maximal set $S'$ of such atoms for which $\adom{S'}$ is a guarded set in $D^+$, there exists a unique atom $r'(\abf')$ in $D^+$ such that $\adom{S'} \subseteq \abf'$. Thus, implicitly, the definition insures that `maximal' guards, i.e. guards which cover as many atoms, are added to $h(p)$.

\begin{example}  Let $\class{Q}$ be as in Example~\ref{ex:cover}. For every $i>1$, let $D_{i,0}$ be the $\Sbf_i$-database: $\{S_i(\abf_i), T(a_1, a_2), T(a_1, a_3), \dots T(a_{i-1}, a_i)\}$.
	We have that $D_{i,0} \models Q_i$ and furthermore, $D_{i,0}$ is qi w.r.t. $Q_i$:  $q_i$ is the only contraction mapping into $\chase{\Omc_i}{D_{i,0}}$. Let $D_i$ be the $\Sbf_i$-database $\{S(\abf_i)\}$. Then $(D_i, \uparrow) \in  \mn{mdiv}(D_{i,0}, Q)$, where $\uparrow$ is the identity function on $\abf_i$: any guarded unraveling of $D_{i,0}$ at $\abf_i$ will add back all the facts from $D_{i,0}$.  Thus, $D_i \in \class{D}_{Q_i}$ and for every database of the form $ \mn{ext}(D_i, \uparrow, D_{i,0})$, $D_{i,0} \subseteq \mn{ext}(D_i, \uparrow, D_{i,0})$ (depending on guarded unravelings, $\mn{ext}(D_i, \uparrow, D_{i,0})$ might contain some superfluous facts). In the following we fix such a database $\mn{ext}(D_i, \uparrow, D_{i,0})$ and refer to it as $D_i^+$.
	
	 The CQ $q_i$ maps then into $\chase{\Omc_i}{D_i^+}$ via a homomorphism $h$ with range $\abf_i$. We construct a CQ $p_c$ belonging to the cover of $Q_i$ based on $h$ and $D_i^+$: $D[p_c]$ will be the union of $\psi_i^R(\abf_i) \wedge \psi_i^T(\abf_i)$, the image of $q_i$ under $h$, with the singleton set $\{S(\abf_i)\}$ ($D_i$) from $D_i^+$. No extra atoms are needed to be added ($S=D$) as $S(\abf_i)$ is a guard for all atoms from $h(q_i)$. Thus $D[p_c]= \psi_i^R(\abf_i) \wedge \psi_i^T(\abf_i) \wedge  S(\abf_i)$, i.e it is isomorphic to $D[q'_i]$ from Example~\ref{ex:cover}. It can be shown that it is the unique such CQ (up to isomorphism) and thus the cover of $Q_i$ is identical (up to isomorphism) to the OMQ $Q'_i$ from Example~\ref{ex:cover}.		
\end{example}

As next lemma shows, covers are equivalent to original OMQs. 

\begin{lemma}
	\label{lem:coverProperties}
	For $Q=(\Omc, \Sbf, q)$ an OMQ from (GDLog, UCQ), and $Q_c$ its cover, $Q \equiv Q_c$. 
\end{lemma}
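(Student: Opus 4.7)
\medskip
\noindent\emph{Proof plan.} The equivalence breaks into the two containments $Q_c \subseteq Q$ and $Q \subseteq Q_c$, and the first direction is essentially by construction. For every CQ $p_c$ in $q_c$, Definition~\ref{def:cover} guarantees $D[p_c] = h(p) \cup S$ for some CQ $p$ in $q$ and some homomorphism $h$ from $p$ into the chase of an extended characteristic database. Thus if an $\Sbf$-database $D$ satisfies $\chase{\Omc}{D} \models p_c$ via some homomorphism $g$, then $g \circ h$ is a homomorphism from $p$ into $\chase{\Omc}{D}$, yielding $D \models Q$. (Note the schema of the target is irrelevant here since we are mapping into the chase, not into $D$.)

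\smallskip
\noindent For the other direction, assume $D \models Q$. First, I would apply Lemma~\ref{lem:io-prop} to obtain a database $D_0$ with $D_0 \to D$, $D_0 \models Q$, and $D_0$ qi with respect to $Q$. Next I would argue that $\mn{mdiv}(D_0,Q)$ is non-empty: the trivial pair $(D_0,\mn{id})$ lies in $\mn{div}(D_0,Q)$, and since every diversification $(D',\uparrow')$ has $\uparrow'$ injective on $\mn{ker}(D')$ (so $|\mn{ker}(D')|\le|\adom{D_0}|$) and cores are bounded in size once their kernels are bounded (isolated facts are classified up to isomorphism by their repeated-variable pattern, of which only boundedly many can coexist in a core), the partial order $\preceq$ on cores inside $\mn{div}(D_0,Q)$ has only finitely many isomorphism types and is therefore well-founded. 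Pick any minimal element $(D',\uparrow)$ and let $D^+ = \mn{ext}(D',\uparrow,D_0)$; by definition of $\mn{mdiv}$ we have $D^+ \models Q$, so some CQ $p$ in $q$ admits a homomorphism $h \colon p \to \chase{\Omc}{D^+}$. This triple $(D^+,h,p)$ then gives rise, via Definition~\ref{def:cover}, to a cover CQ $p_c$ whose canonical database is a subset of $\chase{\Omc}{D^+}$ (hence the identity is a homomorphism $D[p_c]\to\chase{\Omc}{D^+}$).

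\smallskip
\noindent The remaining and central step is to transport $p_c$ into $\chase{\Omc}{D}$. For this I would construct a homomorphism $\Phi \colon D^+ \to D$ by patching together three ingredients: (i) the homomorphism $\uparrow\colon D'\to D_0$ from the diversification; (ii) for each maximal guarded set $\abf$ in $D'$, the canonical folding homomorphism from the copy of the guarded unraveling $D_0^{\uparrow(\abf)}$ that was glued onto $\abf$ during the construction of $\mn{ext}$ back into $D_0$ (guarded unravelings are witnessed by guarded tree decompositions whose bag labels are images of guarded sets of $D_0$, so the folding map is well-defined and agrees with $\uparrow$ on the interface $\abf$); and (iii) the homomorphism $D_0 \to D$ furnished by Lemma~\ref{lem:io-prop}. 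The universality of the chase then lifts $\Phi$ to $\hat{\Phi}\colon \chase{\Omc}{D^+} \to \chase{\Omc}{D}$, and composing the inclusion $D[p_c]\hookrightarrow \chase{\Omc}{D^+}$ with $\hat{\Phi}$ gives a homomorphism $p_c \to \chase{\Omc}{D}$, so $D \models Q_c$.

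\smallskip
\noindent The main obstacle I anticipate is step (ii): verifying that the gluing respects interfaces and that the folding map is truly a homomorphism, which in turn relies on the standard but slightly technical fact that the guarded unraveling $I^{\abf}$ of $D_0$ admits a canonical homomorphism back to $D_0$ sending each bag to the maximal guarded set of $D_0$ it copies. Everything else is either bookkeeping (the containment $D \subseteq S \subseteq D^+$ in Definition~\ref{def:cover} ensures $D[p_c] \subseteq \chase{\Omc}{D^+}$) or a direct invocation of earlier lemmas and the monotonicity of the chase.
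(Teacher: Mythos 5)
Your proposal is correct and follows essentially the same route as the paper's proof: both directions are handled identically, with $Q\subseteq Q_c$ obtained by passing from $D$ to a qi database $D_0$, taking a minimal diversification, reading off a cover CQ $p_c$ with $D[p_c]\subseteq\chase{\Omc}{\mn{ext}(D_1,\uparrow,D_0)}$, and pulling back along $\mn{ext}(D_1,\uparrow,D_0)\to D_0\to D$. You merely spell out two steps the paper leaves implicit --- the existence of a minimal diversification and the explicit folding homomorphism from the extended database back to $D_0$ --- and both of your justifications are sound.
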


\begin{proof}
		
It is clear that $Q_c \subseteq Q$. We show that $Q \subseteq Q_c$. For $D$ an $\Sbf$-database such that $D \models Q$, there exists a database $D_0 \to D$ which is qi w.r.t. $Q$. Furthermore, there exists a diversification $(D_1, \uparrow) \in \mn{mdiv}(D_0, Q)$. Thus, $\mn{ext}(D_1, \uparrow, D_0) \in \class{D}^{+}_{Q}$, and there exists some CQ $p$ in $q$ s.t. $p$ maps into  $\chase{\Omc}{\mn{ext}(D_1, \uparrow, D_0)}$ via some homomorphism $h$.  We construct a CQ $p_c$ from $q_c$ as in Definition~\ref{def:cover} based on $p$, $h$, and $\chase{\Omc}{\mn{ext}(D_1, \uparrow, D_0)}$. Then, $D[p_c] \subseteq \chase{\Omc}{\mn{ext}(D_1, \uparrow, D_0)}$, thus $p_c \to  \chase{\Omc}{\mn{ext}(D_1, \uparrow, D_0)}$, and as $p_c$ is a CQ in $q_c$, it follows that $\mn{ext}(D_1, \uparrow, D_0) \models Q_c$. As $\mn{ext}(D_1, \uparrow, D_0) \to D_0 \to D$, it follows that $D \models Q_c$.

\qed
\end{proof}

Thus covers are a good candidate for witnesses of low semantic SMW for OMQs from $(\text{GDLog}, \text{UCQ})$. We defer to the next section the result detailing in which way they serve as such witnesses. We  next show that they have finite bounded size:

\begin{lemma}
	\label{lem:coversize}
	For  $Q=(\Omc, \Sbf, q)$ an OMQ from  $(\text{GDLog}, \text{UCQ})$ and $Q_c=(\Omc, \Sbf, q_c)$ its cover, there exists a computable function $g$ such that for every CQ $p_c$ in $q_c$, $|p_c| \leq g(|Q|)$.
\end{lemma}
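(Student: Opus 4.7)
The plan is to use the decomposition $D[p_c] = h(p) \cup S$ provided by Definition~\ref{def:cover} and bound each component separately by a computable function of $|Q|$, then bound the per-atom size by the maximum arity occurring in $\Sbf$.

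First, I would bound $|h(p)|$: since $h$ is a homomorphism from a CQ $p$ occurring in the UCQ $q$, the number of atoms in the image satisfies $|h(p)| \leq |p| \leq |q| \leq |Q|$. Next, writing $S = D \cup (S\setminus D)$, I would invoke Lemma~\ref{lem:mindivers-targets}(2) directly: because $D$ is the database part of a minimal diversification $(D,\uparrow) \in \mn{mdiv}(D_0, Q)$ for some $D_0$ that is qi w.r.t.\ $Q$, there is a computable function $f$ with $|D| \leq f(|Q|)$.

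For the extra guards $S \setminus D$, I would exploit the minimality condition in Definition~\ref{def:cover}(2): $S$ is required only to supply, for each atom $r(\abf) \in h(p)$, a single atom $r'(\abf') \in S$ with $\abf \subseteq \abf'$ and $\abf'$ a maximal guarded set in $D^+$. Hence a minimal $S$ contains at most one guard per atom of $h(p)$ (and typically fewer, since a single guard can cover many atoms simultaneously), giving $|S \setminus D| \leq |h(p)| \leq |Q|$. Combining the three bounds yields $|D[p_c]| \leq 2|Q| + f(|Q|)$, and since each atom has arity at most $\max_{r \in \Sbf} \mn{ar}(r) \leq |Q|$, the overall size satisfies $|p_c| \leq g(|Q|)$ with $g(n) := n\cdot(2n + f(n))$, which is computable.

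The main obstacle is the justification of the third bound: one has to argue that the minimality clause in Definition~\ref{def:cover} genuinely restricts $S\setminus D$ to at most $|h(p)|$ extra atoms. This follows because the guarding requirement is imposed atom-by-atom on $h(p)$, so any $S$ containing more than $|h(p)|$ atoms from $D^+ \setminus D$ could have a guard removed without violating (2)(b), contradicting minimality. Once this is in place, everything else is routine arithmetic on the two previously established bounds.
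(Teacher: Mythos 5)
Your proposal is correct and follows essentially the same route as the paper's own proof: decompose $D[p_c]$ as $h(p) \cup S$, bound $|h(p)|$ by $|Q|$, bound $|D|$ via Lemma~\ref{lem:mindivers-targets}, and use minimality of $S$ to get at most one extra guard per atom of $h(p)$. Your justification of the last step and the explicit arity factor are just slightly more detailed versions of what the paper leaves implicit.
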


\begin{proof}
Each $p_c$ has the form $h(p) \cup S$, where $p$ is some CQ in $q$, $h$	a homomorphism from $p$ to $\chase{\Omc}{\mn{ext}(D, \uparrow, D_0)}$, for some $\mn{ext}(D, \uparrow, D_0) \in \class{D}^+(Q)$ and $S$ the extension of $D$ with atoms from $\mn{ext}(D, \uparrow, D_0) \in \class{D}^+(Q)$ which guard atoms in $h(p)$. As $|h(p)|$ is bounded in $|Q|$, $|D| \leq f(|Q|)$, for some computable function $f$ (from Lemma~\ref{lem:mindivers-targets}) and $S$ contains at most one atom for each atom in $h(p)$, it follows that $|p_c|$ is bounded in $|Q|$. 

	\end{proof}

Finally, we show that:

\begin{lemma}
	\label{lem:coverComput}
	
	For every OMQ $Q=(\Omc, \Sbf, q) \in (\text{GDLog}, \text{UCQ})$, $Q_c=(\Omc, \Sbf, q_c)$ is computable.
	\end{lemma}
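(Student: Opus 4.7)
The plan is to reduce the computation of the cover $Q_c = (\Omc, \Sbf, q_c)$ to a finite enumeration of bounded-size candidate CQs, followed by a decidable membership test. By Lemma~\ref{lem:coversize} there is a computable function $g$ with $|p_c| \leq g(|Q|)$ for every CQ $p_c$ of $q_c$. Since $\Sbf$ is finite, there are only finitely many $\Sbf$-CQs of size at most $g(|Q|)$ up to isomorphism, and a complete list of them is effectively computable. It then suffices to decide, for each candidate $p_c$ on this list, whether $p_c$ occurs in $q_c$, and to collect those for which the answer is yes.

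Unfolding Definition~\ref{def:cover}, a witness that $p_c \in q_c$ consists of a qi $\Sbf$-database $D_0$, a minimal diversification $(D, \uparrow) \in \mn{mdiv}(D_0, Q)$, a CQ $p$ of $q$ together with a homomorphism $h \colon D[p] \to \chase{\Omc}{\mn{ext}(D, \uparrow, D_0)}$, and a guarding set $S$, such that $D[p_c] \cong h(D[p]) \cup S$. By Lemma~\ref{lem:mindivers-targets}(2), $|D|$ is bounded by a computable function of $|Q|$, and by construction the same holds for $p$, $h(D[p])$, and $S$. The only component that is not a priori bounded is $D_0$, whose size can be arbitrary (cf.~Example~\ref{ex:notwellfounded}). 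The crux of the argument will be to show that one may always assume $|D_0| \leq F(|Q|)$ for some computable $F$. Here I would exploit the facts that (i) for GDLog ontologies every chase fact lies over a guarded set of the underlying database, and (ii) $\mn{ext}(D, \uparrow, D_0)$ attaches a guarded unraveling of $D_0$ at each of the boundedly many maximal guarded sets of $D$. Consequently, only the guarded unravelings $D_0^{\uparrow(\abf)}$ can interact with the bounded-size data $h(D[p]) \cup S$ and with the entailment of $Q$ through $\mn{ext}(D, \uparrow, D_0)$. Each such unraveling is itself, for the purposes of $Q$, determined by a bounded-depth neighborhood of $\uparrow(\abf)$ in $D_0$. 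Pruning $D_0$ to the union of these neighborhoods and, if necessary, restoring qi-ness via Lemma~\ref{lem:io-prop}, would then yield a qi substitute of computably bounded size that continues to support the witness.

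Once the bound $F$ is in place, the membership test for a fixed $p_c$ becomes a brute-force search over bounded finite objects: enumerate $\Sbf$-databases $D_0$ of size at most $F(|Q|)$; check qi-ness by examining the contractions of CQs of $q$ and all $\Sbf$-databases $D' \to D_0$ of bounded size; enumerate candidate diversifications $(D,\uparrow) \in \mn{mdiv}(D_0,Q)$ and verify minimality by pairwise comparison; compute the finite chase $\chase{\Omc}{\mn{ext}(D,\uparrow,D_0)}$; and for each CQ $p$ of $q$ enumerate the bounded-size homomorphisms $h$ and minimal guarding sets $S$ to test the required isomorphism with $D[p_c]$. All of these steps are routine manipulations of bounded finite structures. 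The hardest part of the argument will be the locality/pruning step bounding $D_0$: one has to show that restricting $D_0$ to a bounded neighborhood around $\bigcup_{\abf} \uparrow(\abf)$ preserves not only the chase-level witness of $h(D[p]) \cup S$ and the entailment of $Q$, but also (possibly after mild adjustment) qi-ness and minimal-diversification membership.
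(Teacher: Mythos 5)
Your first step (enumerate the finitely many candidate CQs of size at most $g(|Q|)$ using Lemma~\ref{lem:coversize}) matches the paper. The divergence, and the gap, is in the membership test. You correctly identify that the only unbounded component of a witness is $D_0$, but the claim that one may always replace $D_0$ by a qi substitute of size $F(|Q|)$ is precisely the hard content, and the locality argument you sketch does not deliver it. Guardedness of the GDLog chase and the bounded number of attachment points only control the \emph{existential} part of the witness (that $h(D[p])\cup S$ sits in $\chase{\Omc}{\mn{ext}(D,\uparrow,D_0)}$ and that $Q$ is entailed). They say nothing about the two \emph{universally quantified} conditions: qi-ness of $D_0$ (Definition~\ref{def:qi-initial} quantifies over \emph{all} $\Sbf$-databases $D'\to D_0$ with $D'\models Q$, of unbounded size -- your later suggestion to check qi-ness against bounded $D'$ only is a second unjustified restriction of the same kind) and minimality of $(D,\uparrow)$ in $\mn{mdiv}(D_0,Q)$, which quantifies over all competing diversifications of $D_0$. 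Pruning $D_0$ to a neighborhood can break both: $\mn{ext}(D,\uparrow,D_0')$ may no longer entail $Q$, and a diversification minimal w.r.t.\ the large $D_0$ need not be minimal w.r.t.\ the pruned one (or vice versa), which changes which CQs enter $q_c$. Example~\ref{ex:notwellfounded} is exactly a warning that the relevant databases cannot be assumed small, and ``restoring qi-ness via Lemma~\ref{lem:io-prop}'' replaces $D_0$ by some other, again uncontrolled, database, after which the whole witness has to be rebuilt.

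The paper avoids ever bounding $|D_0|$. It instead gives a characterization of the CQs of $q_c$ in terms of \emph{extended adornments} (Claims~\ref{claim:ext_adorn} and~\ref{claim:alt_charac}), which eliminates guarded unravelings from the description, and then encodes qi-ness, diversification, the order $\preceq$, and minimality as formulas of guarded second-order logic evaluated over the candidate structure; decidability comes from the fact that entailment of $Q$ is always witnessed by databases of treewidth bounded by the diameter of $Q$ (Claim~1), so one may work over sparse structures where GSO collapses to MSO and is decidable. In other words, the unbounded quantification over databases that your pruning step tries to eliminate is absorbed into the logic rather than bounded combinatorially. If you want to salvage your route, you would have to actually prove the $F(|Q|)$ bound on qi databases and show it is compatible with minimality of diversifications; as it stands that is an unproven claim on which the entire algorithm rests.
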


\begin{proof}
	
We denote with $\Sbf'$ the schema $\Sbf$ extended with relational symbols which occur in $Q$ and we define \emph{the diameter of $Q$} to be the maximum between the maximum arity of some relational symbol from $\Sbf$ and the number of variables in some CQ in $q$. We will use Lemma~\ref{lem:coversize} in conjunction with an encoding into Guarded Second Order Logic (GSO)  \cite{GrHO02} to check whether some structure over the extended signature $\Sbf'$ of size bounded by $g(|Q|)$  is of the right form to be part of $q_c$. While GSO is in general undecidable, on sparse structures it has the same expressivity as MSO \cite{Blum10}.  For our purposes, it is possible to restrict to structures of treewidth bounded by the diameter of the given OMQ, which are sparse \cite{Cour03}. This is due to the following property of OMQs from (GDLog, UCQ) \cite{Feier-Lutz-Kuusisto-LMCS-19}:

\begin{nclaim}
	Let $Q$ be an OMQ from (GDLog, UCQ) and let $d$ be the diameter of $Q$. Then, for every database $D$ such that $D \models Q$, there exists a database $D'$ such that $D' \to D$, $D' \models Q$ and $D'$ has treewidth at most $d$. 
\end{nclaim}

One potential difficulty for the GSO encoding is the usage of guarded unravelings in our definitions. To overcome this, we provide an alternative characterization for CQs from $q_c$ which does not use  guarded unravelings.

 For a CQ $p$, an \emph{adornment} of $p$ is a tuple $(\delta, S)$ where $\delta=(T, \chi)$, with $T=(V, E)$  a tree with root $v_0$, is a tree decomposition of $D[p]$ and $S$ is a minimal $\Sbf$-database such that: 
\begin{enumerate}
	\item for every atom $r(\abf)$ in $p$, $\abf$ is guarded in $S$;
	\item for every $v \in V \setminus \{v_0\}$, $\chi(v)$ is guarded in $S$.
\end{enumerate}
Intuitively, an adornment extends a CQ with guards from an $\Sbf$-database and guesses which part of the CQ maps into a main database (the atoms from $p_{|\chi(v_0)}$), and which part maps into guarded unravelings attached to the main database (the remaining atoms from $p$). 

An \emph{extended adornment} of $p$ is a tuple $(\delta, S, \uparrow, D_0)$ such that $(\delta, S)$ is an adornment for $p$ as above, $\uparrow$ is a homomorphism from $S$ to $D_0$ which is i.g.s. and which furthermore has the property that $(D, \uparrow)$ is a diversification of $D_0$, where $D=S_{|\chi(v_0)}$. An extended adornment of $p$ is \emph{valid w.r.t. $Q$} if for every atom $r(\abf)$ in $p$, $D_0 \models (\Omc, \Sbf, r(\uparrow(\abf)))$. Intuitively, an extended adornment of $p$ which is valid w.r.t. $Q$ is an alternative representation for a database $D^+$ of the form  $\mn{ext}(D, \uparrow, D_0)$ for which $p \to \chase{\Omc}{D^+}$, together with some aditional information regarding how $p$ maps into $\chase{\Omc}{D^+}$. Note that guarded unravelings do not figure explicitly in the definition of an extended adornment which is valid w.r.t. $Q$. However, due to the property of guarded unravelings regarding preservation of answers to atomic queries, the requirement that for every atom $r(\abf)$ in $p$ it holds that $D_0 \models (\Omc, \Sbf, r(\uparrow(\abf)))$, is nothing else but requiring that 
$D_0^{\uparrow(\abf)} \models (\Omc, \Sbf, r(\uparrow(\abf))) $, where $D_0^{\uparrow(\abf)}$ is the guarded unraveling of $D_0$ at $\uparrow(\abf)$. The following claim captures the connection between valid adornments and databases of the form $\mn{ext}(D, \uparrow, D_0)$ which entail $Q$:

\begin{nclaim}
	\label{claim:ext_adorn}
	For $p$ some contraction of a CQ in $q$ and  $(\delta, S, \uparrow, D_0)$ an extended adornment of $p$ which is valid w.r.t. $Q$, with $\delta=(T, \chi)$, and $T=(V, E)$ a tree with root $v_0$, it is the case that $(S_{|\chi(v_0)}, \uparrow) \in \mn{div}(D_0, Q)$. 
	\end{nclaim}

\begin{proof}
	Let $D=S_{|\chi(v_0)}$ and $D^+=\mn{ext}(D, \uparrow, D_0)$. It is possible to construct a homomorphism $h'$ from $p$ to $\chase{\Omc}{D^+}$, by taking $h'(x)=h(x)$, if $h(x) \in \chi(v_0)$ and then inductively traverse the tree decomposition $\delta$ and for each $v \in V \setminus \{v\}$, define $h'$ on  $\chi(v)$  guided by $h$ and $\uparrow$. 
	\end{proof}

An extended adornment $(\delta, S, \uparrow, D_0)$ of $p$ which is valid w.r.t. $Q$ is \emph{minimal} if there exists no extended adornment $(\delta', S', \downarrow, D_0)$ of some contraction $p'$ which is valid w.r.t. $Q$, with $\delta'=(T', \chi')$ a tree decomposition of $p'$ with root $v'_0$, such that $D' \preceq D$ and $D \not \preceq D'$, where $D'=S'_{|\chi'(v'_0)}$ and $D=S_{|\chi(v_0)}$.

Then, the following hold: 

\begin{nclaim}
	\label{claim:alt_charac}
	A CQ $p_c$ is part of $q_c$ iff $D[p_c]$ is isomorphic to a database of the form $D[p'] \cup S$ with $p'$ the contraction of some CQ in $q$ for which there exists an extended adornment $(\delta, S, \uparrow, D_0)$ of $p$ which is valid w.r.t. $Q$ and minimal, with $D_0$ an $\Sbf$-database which is qi w.r.t. $Q$. \end{nclaim}
\begin{proof}
	
$`\Rightarrow'$: Assume $p_c$ is a CQ in $q_c$. Then, $p_c$ is of the form $h(p) \cup S$, where $p$ is some CQ in $q$, $h$ is a homomorphism from $p$ to a database $\mn{ext}(D, \uparrow, D_0)$ from $\class{D}^+_Q$, and $S$ is a minimal set such that $D \subseteq S \subseteq D^+$ and for every atom $r'(\abf')$ in $h(p)$, there exists an atom $r(\abf)$ in $S$ with $\abf' \subseteq \abf$. We construct a tree decomposition $\delta=(T, \chi)$ of $h(p)$ such that $\chi(v_0)=\adom{D}$, and for every $v \neq v_0$ $\chi(v)$ is maximal such that it is guarded in $S \setminus D$. Then, $(\delta, S)$ is an adornment of $h(p)$. It can also be easily seen that $(\delta, S, \uparrow, D_0)$ is an extended adornment of $h(p)$ which is valid w.r.t. $Q$: this is due to the fact that $(D, \uparrow) \in \mn{div}(D_0)$. To see that it is minimal, assume that this is not the case. Then, there exists another extended adornment $(\delta', S', \uparrow', D_0)$ of some contraction $p'$ of a CQ from $q$ which is valid w.r.t. $Q$ and $D' \preceq D$ and $D \not \preceq D'$ where $D'=S'_{|\chi'(v'_0)}$. But then, from Claim~\ref{claim:ext_adorn} we know that $(D', \uparrow') \in \mn{div}(D_0, Q)$. Thus, $(D=S_{|\chi(v_0)}, \uparrow)\notin \mn{mdiv}(D_0, Q)$ -- contradiction.

$`\Leftarrow'$: Assume that there exists an extended adornment $(\delta, S, \uparrow, D_0)$ of some contraction $p$ which is valid w.r.t. $Q$ and minimal and such that $D_0$ has the qi property w.r.t. $Q$. Let $D=S_{|\chi(v_0)}$. From Claim~\ref{claim:ext_adorn} it follows that $(D, \uparrow) \in \mn{div}(D_0, Q)$. Assume that $(D, \uparrow) \notin \mn{mdiv}(D_0, Q)$. Then, there exists  $(D', \uparrow') \in \mn{mdiv}(D_0, Q)$ such that $D' \preceq D$ and $D \not \preceq D'$. Then, there must be some CQ $p'$ which maps via a homomorphism $h'$ to $\chase{\Omc}{\mn{ext}(D', \uparrow', D_0)}$. Then, it is possible to construct an extended adornment $(\delta', S', \uparrow'', D_0)$ of $h'(p')$ which is valid w.r.t. $Q$ such that $D'=S'_{|\chi(v_0)}$ -- contradiction with the fact that $(\delta, S, \uparrow, D_0)$ is minimal. Thus, $(D, \uparrow) \in \mn{mdiv}(D_0, Q)$. Then, $p \cup S$ fulfills all conditions from Definition~\ref{def:cover} and thus it is isomorphic to a CQ in $q_c$.

	\end{proof}

Thus, to compute $q_c$, it is enough to iterate through tuples $(p, S)$ such that $p$ is a contraction of a CQ in $q$, for which it holds that $|D[p] \cup S|$ is bounded by $g(|Q|)$, and to check whether there exist an adornment of $p$ of the form $(\delta, S)$ which can be extended to an extended adorment $(\delta, S, \uparrow, D_0)$ of $p$ which is valid w.r.t. $Q$ and minimal and for which $D_0$ is qi w.r.t $Q$. To perform such checks, we need formulas of the following type: 

\begin{itemize}
	\item $\phi_{\mn{qi}}(D_0)$: evaluates to true when $D_0$ is an $\Sbf$-database which has the qi property w.r.t. $Q$;
	\item $\phi_{\mn{div}}(D, \uparrow, D_0)$: evaluates to true if $(D, \uparrow)$ is a diversification of $D_0$;
\item $\phi_{\prec}(D_1, D_2)$: evaluates to true if $D_1 \preceq D_2$ and $D_2 \not \preceq D_1$;
\item $\phi_{\mn{c}}(p)$: evaluates to true if $p$ is a contraction of some CQ in $q$;
\item $\phi_{\mn{ad}}(p, D, S)$: evaluates to true if there exists an adornment $(\delta, S)$ of $p$ with $\delta=(\chi, T)$, and $T=(V, E)$ a tree with root $v_0$, such that $S_{|\chi(v_0)}=D$;
	\item $\phi_{\mn{eval}}(p, \uparrow, D)$: for $p$ a CQ and $\uparrow$ a function from $\adom{p}$ to $\adom{D}$ which is i.g.s., it evaluates to true when for every atom $r(\abf)$ from $p$, it holds that $D \models (\Omc, \Sbf, r(\uparrow(\abf)))$;
	\item $\phi_{\mn{va}}(p, D, S, \uparrow, D_0)$: evaluates to true if 
    $\uparrow$ is a  homomorphism from $S$ to $D_0$ which is i.g.s. and $\phi_{\mn{div}}(D, \uparrow, D_0)$ holds and $\phi_{\mn{eval}}(p, \uparrow, D)$ holds.
\end{itemize}

We do not provide the actual encodings for the above formulas, as while they are long and technical, they are also rather straightforward. For examples of similar encodings, the reader can consult the proof of Lemma~D.5 in \cite{BDFLP-PODS20}. Given such formulas, we construct a formula:

\begin{align*}
	\Phi(p,S)&=\exists D, \uparrow, D_0 \ \phi_{\mn{ad}}(p,D, S) \wedge  \phi_{\mn{qi}}(D_0) \wedge \phi_{\mn{va}}(p, D, S, \uparrow, D_0) \wedge \forall p', D', S', \uparrow'
	\\
	& \quad \neg (\phi_{\mn{c}}(p') \wedge \phi_{\mn{ad}}(p', D', S') \wedge \phi_{\mn{va}}(p', D', S', \uparrow, D_0) \wedge \phi_{\prec}(D', D))
\end{align*}

Then $D[p] \cup S \models \Phi(p, S)$ iff $D[p] \cup S$ is isomorphic to $D[p_c]$, with $p_c$ from $Q_c$.



%
%
%
%

\end{proof}

\section{Main Results}
\label{sec:results}

In this section we establish the main results of our paper: a syntactic characterization for fixed-parameter tractability of OMQs from $(\text{GDLog}, \text{UCQ})$, a reduction from parameterized uniform CSPs to parameterized OMQs from $(\text{GDLog}, \text{UCQ})$, and a semantic characterization for fixed-parameter tractability of OMQs from $(\text{GTGD}, \text{UCQ})$. 

\subsection{Results for (GDLog, UCQ)}

For a class $\class{Q}$ of OMQs from  $(\text{GDLog}, \text{UCQ})$, we denote with $\class{D}_{\class{Q}}$ the class of characteristic databases for OMQs from $\class{Q}$ and with $\class{Q}_{c}$ the class of covers for OMQs from $\class{Q}$. The first main result which we show in this section is as follows:

\begin{theorem}[Main Result 3]
	\label{thm:unbArityGDLog}
	Let $\class{Q}$ be a r.e. enumerable class of OMQs from $(\text{GDLog}, \text{UCQ})$. Under the Exponential Time Hypothesis, the following are equivalent: 
	
	\begin{enumerate}
		\item $\text{p-OMQ}(\class{Q})$ is fixed-parameter tractable
		\item $\class{Q}_{c}$ has bounded sub-modular width 
		\item $\class{D}_{\class{Q}}$ has bounded sub-modular width. 
	\end{enumerate}
	
\end{theorem}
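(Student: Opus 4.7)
The plan is to establish the three equivalences cyclically: $(3) \Rightarrow (2) \Rightarrow (1) \Rightarrow (3)$, with the final implication being the main obstacle.

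For $(3) \Rightarrow (2)$, I would argue that bounded submodular width of characteristic databases transfers to covers with at most a small additive overhead. Fix any CQ $p_c$ in the cover of some $Q \in \class{Q}$, of the form $h(p) \cup S$ with $D \subseteq S \subseteq D^+$, where $D \in \class{D}_{\class{Q}}$ and $D^+ = \mn{ext}(D,\uparrow,D_0)$. By Definition~\ref{def:cover}, every atom of $h(p)$ is covered by some maximal guarded set in $S$, so the hypergraph $H_{p_c}$ decomposes as $H_D$ together with additional subhypergraphs attached at guarded sets of $D$ and induced by the guarded unravelings of $D_0$. Since guarded unravelings have submodular width at most $1$ (via the guarded tree decomposition of Section~\ref{subsec:unravel}), any edge-dominated monotone submodular function $f$ on $H_{p_c}$ admits a tree decomposition in which the ``core'' bags covering $H_D$ can be chosen of $f$-value at most $\mn{SMW}(D)$ and the ``attached'' bags consist of single guarded edges with $f$-value at most $1$. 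This yields $\mn{SMW}(p_c) \leq \mn{SMW}(D) + 1$, hence bounded submodular width of $\class{D}_{\class{Q}}$ implies bounded submodular width of $\class{Q}_c$.

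For $(2) \Rightarrow (1)$, I would combine the equivalence $Q \equiv Q_c$ (Lemma~\ref{lem:coverProperties}) with the size and computability results for covers (Lemmas~\ref{lem:coversize} and~\ref{lem:coverComput}). For any input $\Sbf$-database $D$ and OMQ $Q \in \class{Q}$, answering $Q$ on $D$ reduces to evaluating the UCQ $q_c$ over $\chase{\Omc}{D}$. Since $\Omc$ is GDLog, this chase is finite, of polynomial size in $|D|$ (with degree depending on $|Q|$), and computable in polynomial time. If $\class{Q}_c$ has bounded submodular width, then evaluating $q_c$ on $\chase{\Omc}{D}$ is fpt in $|q_c|$ by the algorithm of Chen et al.~\cite{DBLP:conf/ijcai/ChenGLP20} (generalizing Marx~\cite{Marx10}), yielding an fpt algorithm for $\text{p-OMQ}(\class{Q})$.

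The main obstacle is $(1) \Rightarrow (3)$, which I would prove contrapositively using Main Result 3 together with Theorem~\ref{thm:ChenUnbCSPs}. Since $\class{Q}$ is r.e. and characteristic databases for a given OMQ are computable (leveraging the same machinery as Lemma~\ref{lem:coverComput}), the class $\class{D}_{\class{Q}}$ is itself r.e. Suppose for contradiction that $\class{D}_{\class{Q}}$ has unbounded submodular width; then by Theorem~\ref{thm:ChenUnbCSPs}, $\text{p-CSP}(\class{D}_{\class{Q}},\_)$ is not fpt under ETH. On the other hand, given an instance $(D,B)$ of $\text{p-CSP}(\class{D}_{\class{Q}},\_)$, I would enumerate $\class{Q}$ and its characteristic databases until finding an OMQ $Q \in \class{Q}$ with $D \in \class{D}_Q$ — a step whose running time depends only on the parameter $|D|$ — and then compose with the per-OMQ fpt-reduction from $\text{p-CSP}(\class{D}_Q,\_)$ to $\text{p-OMQ}(\{Q\}) \subseteq \text{p-OMQ}(\class{Q})$ supplied by Main Result 3. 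Under assumption (1), the resulting instance is decidable in fpt time, contradicting the lower bound. The delicate point is verifying parameter preservation end-to-end: the size of $Q$ produced from $D$ must be bounded by a computable function of $|D|$, so that the composed reduction is genuinely an fpt-reduction.
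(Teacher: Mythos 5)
Your proposal is correct and follows essentially the same route as the paper: direction $(3)\Rightarrow(2)$ is the paper's Lemma~\ref{lem:connection_div_cover} (the paper obtains $\mn{SMW}(p_c)\leq\mn{SMW}(D)$ rather than $\mn{SMW}(D)+1$, which is immaterial for boundedness), direction $(2)\Rightarrow(1)$ combines Lemmas~\ref{lem:coverComput} and~\ref{lem:FPTsyntactic} exactly as you do (your direct use of the finite polynomial-size GDLog chase is a harmless shortcut past the linear-TGD rewriting), and direction $(1)\Rightarrow(3)$ is the same contrapositive via Theorem~\ref{thm:ChenUnbCSPs} and the reduction of Theorem~\ref{thm:pCSPtopOMQ}. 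If anything, you are more explicit than the paper about the uniformity issue in composing the per-OMQ reduction over the whole class (locating a $Q$ with $D\in\class{D}_Q$ in time depending only on the parameter), a point the paper glosses over.
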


Towards showing the result, the following fpt-reduction from evaluation of parameterized CSP to evaluation of parameterized OMQs from  $(\text{GDLog}, \text{UCQ})$ plays an important role. 


\begin{theorem}[Main Result 2]
		\label{thm:pCSPtopOMQ}
Let $Q=(\Omc, \Sbf, q)$ be an OMQ from (GDLog, UCQ). Then, there exists an fpt-reduction from $\text{p-CSP}(\class{D}_{Q},\_)$ to $\text{p-OMQ}(\{Q\})$. 
\end{theorem}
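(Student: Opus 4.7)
The plan is to define an fpt-reduction $R$ that, on input $(D_0, B)$ with $D_0 \in \class{D}_Q$, outputs the $\Sbf$-database $D = R(D_0, B)$ obtained from $B$ by attaching isomorphic copies of the guarded unravelings occurring in the extended characteristic database $D_0^+ = \mn{ext}(D_0,\uparrow,D_1)$ associated with $D_0$. By Definition~\ref{def:characteristic}, every $D_0 \in \class{D}_Q$ comes with a witnessing qi database $D_1$ and a minimal diversification $(D_0,\uparrow) \in \mn{mdiv}(D_1,Q)$, both of which can be precomputed from $Q$ along the lines of Lemma~\ref{lem:coverComput}; by Lemma~\ref{lem:mindivers-targets}, the sizes of $D_0$ and of each guarded unraveling $D_1^{\uparrow(\abf)}$ w.r.t.\ $Q$ are bounded by a computable function of $|Q|$. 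Concretely, $D$ contains all facts of $B$ and, in addition, for every pair of matching facts $r(\abf) \in D_0$ and $r(\bbf) \in B$, a fresh isomorphic copy of $D_1^{\uparrow(\abf)}$ in which the root tuple $\uparrow(\abf)$ is renamed to $\bbf$ and the remaining constants are replaced by fresh constants, kept disjoint across different copies. With at most $|D_0|\cdot|B|$ attached copies each of size bounded in $|Q|$, $|D|$ is polynomial in $|B|$, so $R$ is fpt.

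For the forward direction, given $h : D_0 \to B$, I would extend $h$ to a homomorphism $h^+ : D_0^+ \to D$ by sending each unraveling attached to $\abf$ in $D_0^+$ isomorphically to the fresh copy attached to the pair $(r(\abf), r(h(\abf)))$ in $D$; this copy exists in $D$ because $h$ preserves $r(\abf)$, so $r(h(\abf)) \in B$. Since $D_0^+ \models Q$ by construction and CQ entailment is preserved by homomorphisms through the chase, $D \models Q$. For the backward direction, suppose $g : p \to \chase{\Omc}{D}$ witnesses $D \models Q$ for some CQ $p$ in $q$. Guardedness of the GDLog chase forces each atom of $p$ to map under $g$ into a single maximal guarded set of $D$, which by construction lies either entirely in $B$ or entirely in a single attached copy $U_{r,\abf,\bbf}$ (interfaced through the root $\bbf \in B$). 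Using the canonical isomorphism between each used copy $U_{r,\abf,\bbf}$ and $D_1^{\uparrow(\abf)}$, I would transport $g$ through the copies---following the construction in the proof of Lemma~\ref{lem:divers-constr}---to produce a database $D^{\star}$ equipped with a homomorphism $\nabla^{\star} : D^{\star} \to D_0$ witnessing $D^{\star} \preceq D_0$ and a homomorphism $\downarrow^{\star} : D^{\star} \to B$, such that $\mn{ext}(D^{\star}, \uparrow \circ \nabla^{\star}, D_1) \models Q$. Then $(D^{\star}, \uparrow \circ \nabla^{\star}) \in \mn{div}(D_1, Q)$ and, since $(D_0,\uparrow) \in \mn{mdiv}(D_1,Q)$ with $D^{\star} \preceq D_0$, minimality forces $D_0 \preceq D^{\star}$, hence in particular $D_0 \to D^{\star}$; composing with $\downarrow^{\star}$ yields $D_0 \to B$.

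The hardest step is making the backward transport argument precise. The central subtlety is that distinct copies used by $g$ may share a root $\bbf \in B$ while corresponding to different maximal guarded sets $\abf, \abf'$ of $D_0$; this is resolved by introducing a fresh constant in $D^{\star}$ for each ``use'' of a copy so that the lifting of $g$ is well-defined and the canonical projection $\nabla^{\star} : D^{\star} \to D_0$ is injective on guarded sets and on $\mn{ker}(D^{\star})$---this is exactly what is needed to conclude $D^{\star} \preceq D_0$. Verifying that $\mn{ext}(D^{\star}, \uparrow \circ \nabla^{\star}, D_1) \models Q$ and that the constructed $\downarrow^{\star}$ is indeed a homomorphism into $B$ requires bookkeeping analogous to the proof of Lemma~\ref{lem:divers-constr}, and the final step that $D_0 \leftrightarrow D^{\star}$ mirrors the kernel-coverage argument in the proof of Lemma~\ref{lem:mindivers-targets}. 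The remaining parts---bounding $|D|$, computability of the preprocessing, and the parameter growth condition---are routine.
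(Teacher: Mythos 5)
There is a genuine gap, and it is in the very first step: the choice of the base of the constructed database. You attach the guarded unravelings to $B$ itself, so your output contains all of $B$ as a substructure. The paper instead bases the construction on the direct product $D_0 \times B$, restricted to those facts on which the projection $\pi$ to $D_0$ is injective, and attaches the unravelings of the qi database $D_1$ to the guarded sets of that restricted product (via $\uparrow \circ \pi$). This difference is fatal for the backward direction. Concretely, take $\Omc=\{S(x,y)\to R(x,y)\}$, $\Sbf=\{R,S\}$, $q=\exists x,y,z\, R(x,y)\wedge R(y,z)\wedge R(z,x)$, and the characteristic database $D_0=\{R(a,b),R(b,c),R(c,a)\}$ (one checks it is qi and its own minimal diversification, since the unravelings are acyclic and cannot supply a triangle). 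For $B=\{S(u,v),S(v,w),S(w,u)\}$ there is no homomorphism $D_0\to B$, yet your output database contains $B$, and $\chase{\Omc}{B}$ contains an $R$-triangle, so it entails $Q$ regardless of the attached copies. The transport argument you sketch cannot be repaired: the portion of a match of $q$ that lands in the $B$-part of your database carries no homomorphism to $D_0$ at all, so no $D^{\star}$ with $D^{\star}\preceq D_0$ can be extracted from it, and the minimality of $(D_0,\uparrow)$ in $\mn{mdiv}(D_1,Q)$ has nothing to act on. More generally, in your construction the skeleton connecting the roots of different attached copies is supplied by $B$ alone, whereas the whole point of the product is that every fact of the base projects i.g.s.\ onto $D_0$, so that a match in the chase of the reduct can be pushed forward to $\chase{\Omc}{\mn{ext}(D_0,\uparrow,D_1)}$, where the qi property of $D_1$ forces it to be injective and Lemma~\ref{lem:divers-constr} yields a diversification $D_3$ with $(D_3,g)\in\mn{div}(D_1,Q)$ and $D_3\preceq D_0$; minimality then gives $D_0\preceq D_3\to D_0\times B\to B$.

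Your forward direction, the size bound via Lemma~\ref{lem:mindivers-targets}, the computability of the preprocessing, and the parameter bound are all fine, and the diversification/minimality machinery you invoke is indeed the machinery the paper uses --- but it must be applied to a sub-database of $D_0\times B$, not to $B$. Note also that the paper's forward direction (from $D_0\to B$ to entailment) is then not a copy-attaching argument but a core argument: the homomorphism $h:D_0\to D_0\times B$ composed with $\pi$ is an endomorphism of the core $D_0$, hence injective, so $h$ lands inside the projection-injective part $D_2$ of the product and lifts to $\mn{ext}(D_0,\uparrow,D_1)\to\mn{ext}(D_2,\uparrow\circ\pi,D_1)$.
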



\begin{proof}
	Let $(D, B)$ be an instance of $\text{p-CSP}(\class{D}_{Q},\_)$. We construct an \Sbf-database $D_2^+$ and show that $D  \to B$ iff $D_2^+ \models Q$, as follows: let $\pi$ be  the projection mapping from $D \times B$ to $D$ and let $D_2$ be the database obtained from $D \times B$ by dropping all atoms $R(\abf)$ for which $\pi|_{\abf}$ is not injective. Then, $\pi|_{\adom{D_2}}$ is a homomorphism from $D_2$ to $D$ which is i.g.s.  As $D \in \class{D}_Q$, there exists a database $D_0$ which is qi w.r.t. $Q$ and some diversification $(D, \uparrow) \in \mn{mdiv}(D_0, Q)$ such that $\mn{ext}(D, \uparrow, D_0) \in \class{D}^+_Q$. We will refer in the following to $\mn{ext}(D, \uparrow, D_0)$ as $D^+$. Then  $\pi \circ \uparrow$ is a homomorphism from $D_2$ to $D_0$ which is i.g.s. We define $D_2^+$ as $\mn{ext}(D_2, \pi \circ \uparrow, D_0)$. 
						
	`$\Rightarrow$': Assume that $D_2^+ \models Q$. The strategy of the proof is as follows: using Lemma~\ref{lem:divers-constr} we construct a diversification $(D_3, \iota)$ of $D_2$ and then using the qi property of $D_0$, and subsequently of $D^+$, we show that  $D_3 \preceq D$ and that  $(D_3, g) \in \mn{div}(D_0, Q)$, for some homomorphism $g$. But $(D, \uparrow)$ is a minimal diversification of $D_0$ w.r.t. $Q$. Thus, $D \preceq D_3$, so $D \rightarrow D_3 \rightarrow D_2 \rightarrow D \times B \rightarrow B$.
	
	Now to the details. Let $p'$ be some contraction of a CQ $p$ in $Q$ which maps injectively only via an (injective) homomorphism $h'$ into $\chase{\Omc}{D_2^+}$ and let $A=\mn{ran}(h') \cap \adom{D_2}$.  Also, let $\pi^+$ be the extension of the projection homomorphism $\pi$ from $D_2$ to $D$ to a homomorphism from $D_2^+$ to $D^+$. Then, $\pi^+$ is a homomorphism also from $\chase{\Omc}{D_2^+}$ to $\chase{\Omc}{D^+}$ and $\pi^+ \circ h'$ is a homomorphism from $p'$ to $\chase{\Omc}{D^+}$. As $D_0$ is qi w.r.t $Q$, according to Lemma~\ref{lem:io-prop}, so is $D^+$. From Definition~\ref{def:qi-initial}, as $p' \rightarrow^{io} \chase{\Omc}{D_2^+}$, and $D_2^+ \rightarrow D^+$, it must be the case that $p' \rightarrow^{io} \chase{\Omc}{D^+}$, and thus $\pi^+ \circ h'$ is injective. Further on, as $A \subseteq \mn{ran}(h')$, it follows that  $\pi^+_{|A}$ is injective. Also, as $A \subseteq \adom{D_2}$, it follows that $\pi_{|A}$ is injective.
	
	As $D_2^+$ is of the form $\mn{ext}(D_2, \pi \circ \uparrow, D_0)$, according to  Lemma~\ref{lem:divers-constr}  there exists a diversification $(D_3, \iota)$ of $D_2$ such that $\iota$ is the identity on $\mn{ker}(D_3)$, $\mn{ker(D_3)} \subseteq A$, and $\mn{ext}(D_3, g, D_0) \models Q$, where $g$ is the composition homomorphism $\uparrow \circ \xspace \pi \xspace \circ \iota$ from $D_3$ to $D_0$. Let $D_3^+$ be $\mn{ext}(D_3, g, D_0)$. Note that $g$ is i.g.s. Also, let $h=\pi \xspace \circ \iota$ be the composition homomorphism from $D_3$ to $D$. It can be seen that $h$ is also i.g.s. Figure~\ref{fig:exp} tries to provide an overview of all the homomorphisms employed in the proof. Note that while the figure depicts extended databases, e.g. databases like $\mn{ext}(D, \uparrow, D_0)$ ($D^+$), most of the depicted homomorphisms are between the `base' databases, e.g. $D$, $D_3$, etc.

	\begin{figure}
	\begin{boxedminipage}{\columnwidth}
	
		\begin{tikzpicture}[auto, scale=0.8]
		\GraphInit
		
		\node(n1) at (0.5, 0){$p$};
		\node(n2) at (2.3, 0) {$p'$};
		\node(n3) at (5.5, 0) {$\chase{\Omc}{D_2^+}$};
		\node(n4) at (10.5, 0) {$\chase{\Omc}{D^+}$};
		\node(n5) at (14.5, 0) {$\chase{\Omc}{D_0}$};
		\node(n6) at (2, -1.5) {$\chase{\Omc}{D_3^+}$};
		\node(n7) at (4.3, 0.2){\scriptsize{io}};
		\node(n8) at (9.35, 0.3){\scriptsize{io}};
		\draw [->]  (n1)  to (n2) ;
				\draw [->]  (n1)  to (n6) ;
		\draw [->] (n2)  to  node [midway, above] {\small{$h'$}} (n3);
		
		\draw [->] (n3)  to  node [midway, above] {\small{$\pi$}} node [midway, below] {\small{i.g.s.}} (n4);
		\draw [->] (n4)  to node [midway, above] {\small{$\uparrow$}} node [midway, below] {\small{i.g.s.}} (n5); 
		\draw (n6) [->, out=0, in=-170]   to node [midway, above] {\small{$\iota$}} node [midway, below] {\small{i.g.s.}} (6, -0.3); 
		\draw (n2) [->, out=25, in=155]  to (n4) ;
		\draw (n6) [->, out=-5, in=-170] to node [midway, above] {\small{$h$}} node [midway, below] {\small{i.g.s.}}(10.8, -0.3) ;
		\draw (n6) [->, out=-10, in=-170] to node [midway, above] {\small{$g$}} node [midway, below] {\small{i.g.s.}} (14.8, -0.3) ;
				
		\end{tikzpicture}
\end{boxedminipage}
\label{fig:exp}
	\caption{Constructions for Proof of Direction $\Rightarrow$ of Theorem~\ref{thm:pCSPtopOMQ}}
	\end{figure}
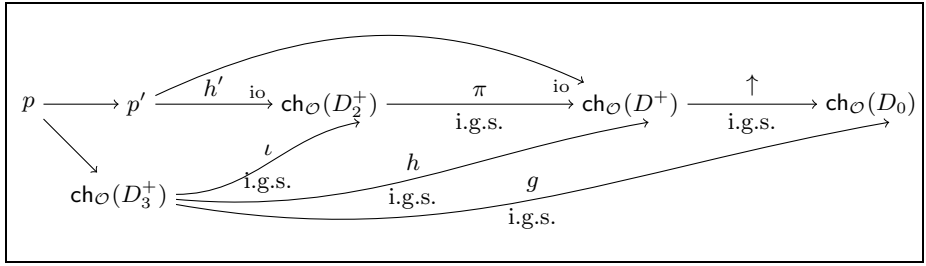
	
	We show next that $h|_{\mn{ker}(D_3)}$ is also injective. We have that $\mn{ran}(\iota_{|\mn{ker(D_3)}})=\mn{ker(D_3)}$, thus $h|_{\mn{ker}(D_3)}= \pi \xspace \circ \iota_{|\mn{ker(D_3)}}$ is the same as $ \pi_{|\mn{ker(D_3)}} \xspace \circ \iota_{|\mn{ker(D_3)}}$. As $\mn{ker(D_3)} \subseteq A$, $\pi_{|A}$ is injective, and $\iota_{|\mn{ker(D_3)}}$ is the identity function, it follows that $h|_{\mn{ker}(D_3)}$ is injective. Recall that $h$ is also i.g.s. Thus, $(D_3, h)$ is a diversification of $D$ and $D_3 \preceq D$. As $D \preceq D_0$, it follows that $D_3 \preceq D_0$. At the same time, $\mn{ext}(D_3, g, D_0) \models Q$, thus $(D_3, g) \in \mn{div}{D_0, Q}$. But $(D, \uparrow) \in \mn{mdiv}(D_0, Q)$, thus $D \preceq D_3$. Thus, $D \to D_3 \to D_2 \to D \times B \to B$.
	
	\smallskip
	
	`$\Leftarrow$': Assume that $D \to B$. Then, $D$ maps into $D \times B$ via some homomorphism $h$. Let $A=\mn{ran}(h)$. At the same time, $D \times B$ maps into $D$ via the projection mapping $\pi$. Thus, $\pi \circ h$ is a homomorphism from $D$ to itself. As $D$ is a core, $\pi \circ h$ must be injective, and thus $\pi|_A$ is injective. Then, the database $(D \times B)|_A$ is a sub-structure of $D_2$, the structure obtained from $D \times B$ by removing all facts $r(\abf)$ for which $\pi|_{\abf}$ is not injective. Thus, $h$ is a homomorphism from $D$ to $D_2$, or $D^+ \to D_2^+$. As $D^+ \models Q$, it follows that $D_2^+ \models Q$. 
		\end{proof}

%

We prove now the counter-positive of direction `$1 \Rightarrow 3$' of Theorem~\ref{thm:unbArityGDLog}. Let us assume that $\class{D}_{\class{Q}}$ has unbounded semantic submodular width. As all $D \in \class{D}_{\class{Q}}$ are cores and cores witness semantic submodular width \cite{DBLP:conf/ijcai/ChenGLP20}, $\class{D}_{\class{Q}}$ must have unbounded submodular width. Then, according to Theorem~\ref{thm:ChenUnbCSPs} it cannot be evaluated in FPT, and based on our reduction from Theorem~\ref{thm:pCSPtopOMQ}, $\text{p-OMQ}(\class{Q})$ cannot be evaluated in FPT either.

To establish direction `$3 \Rightarrow 2$' of Theorem~\ref{thm:unbArityGDLog}, we show the following: 

\begin{lemma}
	\label{lem:connection_div_cover}
	Let $Q=(\Omc, \Sbf, q)$ be an OMQ from $(\text{GDLog}, UCQ)$ and $Q_c=(\Omc, \Sbf, q_c)$ be its cover. Then, $\mn{SMW}(Q_c) \leq  \mn{SMW}(\class{D}_Q)$. 
	\end{lemma}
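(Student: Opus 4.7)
The plan is to reduce the statement to a per-CQ claim: for every CQ $p_c$ in $q_c$ built according to Definition~\ref{def:cover} from a characteristic database $D \in \class{D}_Q$, a homomorphism $h$ from some CQ $p$ in $q$ to $\chase{\Omc}{D^+}$ (where $D^+ = \mn{ext}(D, \uparrow, D_0)$), and a minimal guard set $S$ with $D \subseteq S \subseteq D^+$, I will show $\mn{SMW}(p_c) \leq \mn{SMW}(D)$. Since $\mn{SMW}(Q_c) = \mn{SMW}(q_c) = \max_{p_c} \mn{SMW}(p_c)$ and each $\mn{SMW}(D) \leq \mn{SMW}(\class{D}_Q)$, this delivers the lemma.

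Fix an arbitrary monotone submodular edge-dominated function $f$ on $H_{D[p_c]}$ with $f(\emptyset)=0$. Because $D \subseteq D[p_c]$, every edge of $H_D$ is also an edge of $H_{D[p_c]}$, so the restriction of $f$ to $2^{\adom{D}}$ is still monotone, submodular, and edge-dominated on $H_D$. By definition of $\mn{SMW}(D)$ there is a tree decomposition $(T_D, \chi_D)$ of $H_D$ with $f(\chi_D(t)) \leq \mn{SMW}(D)$ for every node $t$. The task is to extend $(T_D, \chi_D)$ into a tree decomposition of $H_{D[p_c]}$ without inflating any bag's $f$-value beyond $\mn{SMW}(D)$.

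The key structural observation about $D[p_c]$ is that every atom $r'(\abf') \in S \setminus D$ lies in some guarded unraveling $D_0^{\uparrow(\abf_U)}$ attached to $D$ at a maximal guarded set $\abf_U$ of $D$; consequently $\abf' \cap \adom{D} \subseteq \abf_U$ is a guarded set of $D$ and hence sits inside some bag of $(T_D, \chi_D)$. Moreover, since $r'(\abf') \in D[p_c]$, the tuple $\abf'$ is an edge of $H_{D[p_c]}$, so $f(\abf') \leq 1 \leq \mn{SMW}(D)$ (the latter because $\mn{SMW}(D) \geq 1$ whenever $H_D$ has an edge). By the definition of $S$, every edge of $H_{D[p_c]}$ coming from an atom of $h(p) \cup (S \setminus D)$ is contained in some such $\abf'$, and every edge of $H_D$ is already covered by $(T_D, \chi_D)$; hence it suffices to introduce, for each maximal guarded set $\abf'$ of $D^+$ with $r'(\abf') \in S$ and $\abf' \not\subseteq \adom{D}$, one new bag of contents $\abf'$ and $f$-value at most $1$, and attach it to the main tree at some node containing $\abf' \cap \adom{D}$.

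The main obstacle I expect is satisfying the tree-decomposition connectedness condition for fresh constants $c \in \adom{D^+}\setminus\adom{D}$ that may belong to several new bags arising from the same unraveling $U$, since the $S$-bags of $U$ need not be adjacent in the guarded tree decomposition $(T_U, \chi_U)$ of $U$. I plan to arrange the new bags of each unraveling $U$ along the Steiner subtree of $(T_U, \chi_U)$ spanning its $S$-nodes: for every pair of $S$-bags $\abf'_i, \abf'_j$ that become adjacent after contracting the intermediate non-$S$ nodes, insert a bridge bag with contents $\abf'_i \cap \abf'_j$, which is a subset of $\abf'_i$ and hence by monotonicity of $f$ has $f$-value at most $1$. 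The tree-decomposition property of $(T_U, \chi_U)$ guarantees that any fresh $c$ shared between two $S$-bags of $U$ lies in every $T_U$-bag along their path, and hence in each bridge along the contracted path, so connectedness of $c$ is preserved; constants that appear only in non-$S$ bags of $U$ do not belong to $\adom{D[p_c]}$ and may be ignored. The resulting per-unraveling subtree is then glued to $(T_D, \chi_D)$ at a node whose bag contains $\abf_U$, through an additional bridge $\abf'_i \cap \abf_U$ to preserve connectedness of the $\adom{D}$-constants shared with the interior of $U$. Every bag of the final tree decomposition has $f$-value at most $\max(\mn{SMW}(D), 1) = \mn{SMW}(D)$, which yields $\mn{SMW}(p_c) \leq \mn{SMW}(D) \leq \mn{SMW}(\class{D}_Q)$ and completes the argument.
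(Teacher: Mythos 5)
Your proposal is correct and follows essentially the same route as the paper's proof: reduce to a per-CQ claim, restrict the edge-dominated submodular function to $D$, start from an optimal tree decomposition of $H_D$, and absorb everything outside $D$ into bags that are guarded in $D^+$ and hence have $f$-value at most $1$ by edge-domination. The only (immaterial) difference is that the paper obtains the extra bags by extending to the full guarded tree decomposition of $H_{D^+}$ and then projecting onto $\adom{S}$, whereas you attach Steiner subtrees with bridge bags directly; the connectivity bookkeeping is equally delicate in both versions and your handling of it is sound.
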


\begin{proof}
	
	We show that for every CQ $p_c$ in $q_c$ there exists a database in $\class{D}_Q$ of higher or equal submodular width. Every such CQ is of the form $h(p) \cup S$, for some $D^+ \in \class{D}^+_Q$ and $h$ a homomorphism from some CQ $p$ in $q$ to $ \chase{\Omc}{D^+}$, where $S$ is a set of atoms such that $D \subseteq S \subseteq D^+$, and every atom $r(\abf)$ from $S \setminus D$ has the property that $\abf$ is a maximal guarded set in $D^+$. The latter follows from minimality requirement on $S$ in Definition~\ref{def:cover} as well as from Point (2b) of the same definition. 
	As every atom from $h(p)$ is guarded by some atom in $S$, it follows that $\mn{SMW}(p_c)=\mn{SMW}(S)$. We will show that: $\mn{SMW}(S) \leq \mn{SMW}(D)$.
	
	Assume $\mn{SMW}(D)=k$. Also, let $f:2^{\adom{S}} \to \mathbf{R}_{\geq 0}$ be a monotone submodular edge-dominated function w.r.t. $S$. As $D \subseteq S$, the restriction of  $f$ to $2^{\adom{D}}$, $f'$, has the same properties, in particular it is edge-dominated  w.r.t. $D$. Thus, there exists a tree decomposition $\delta=(T, \chi)$ of $H_D$, with $T=(V, E)$ such that $f(\chi(t)) \leq k$ for all $t \in V$. This can be extended to a tree decomposition  $\delta'=(T', \chi')$ of $H_{D^+}$, with $T'=(V', E')$ by adding the tree decompositions corresponding to the guarded unravelings used in the construction of $D^+$ to $\delta$. An important property of $\delta'$ which is inherited from the property of the tree decompositions of the guarded unravelings added to $D$ is that for every $v' \in V' \setminus V$, $\chi'(v')$ is guarded in $D^+$. 
	
	As $S \subseteq D^+$, we can obtain from $\delta'$ a tree decomposition $\delta_S=(T_S, \chi_S)$ of $S$, with $T_S=(V_S, E_S)$,  by projecting away the constants from $\adom{D^+} \setminus \adom{S}$ and deleting nodes $v' \in V'$ for which $\chi'(v)=\emptyset$. Note that $V \subseteq V_S$ (nodes pertaining to constants from $D$ are not deleted). We will then update edges in $E'$ to re-link disconnected parts of the resulting decomposition. Furthermore, due to the fact that every atom $r(\abf)$ from $S \setminus D$ has the property that $\abf$ is a maximal guarded set in $D^+$, for every node $v \in V_S \setminus V$, $\chi_S(v)$ is still a guarded set in $D^+$. 
	
	We look at the $f$-cost of $\delta_S$: for every node $v \in V_s \cap V$, $f(\chi_S(v))=f'(\chi(v))$, thus $f(\chi_S(v)) \leq k$. On the other hand, for every node $v \in V_s \setminus V$, $f(\chi_S(v)) \leq 1$ as $\chi_S(v)$ is guarded and $f$ is edge-dominated. Thus, $\mn{SMW}(S) \leq k=\mn{SMW}(D)$.

	\qed
	\end{proof}

We next show the fpt upper bound, i.e. direction `$2 \Rightarrow 1$' of Theorem~\ref{thm:unbArityGDLog}: if $\class{Q}_{c}$ has bounded SMW, then, $\text{p-OMQ}(\class{Q})$ is fixed-parameter tractable. The result follows from the fact that for an OMQ $Q$, $Q_c$ is computable from $Q$ (Lemma~\ref{lem:coverComput}) and the following lemma (which concerns also OMQs based on GTGDs):


\begin{lemma}
	\label{lem:FPTsyntactic}
	Let $Q$ be an OMQs from (GTGD, UCQ) of bounded submodular width. Then, $\text{p-OMQ}(Q)$ is fixed-parameter tractable.  	
\end{lemma}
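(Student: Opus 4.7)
The approach is a chase-and-evaluate strategy: first handle the (GDLog, UCQ) case directly, and then extend to (GTGD, UCQ) via a rewriting step. For the (GDLog, UCQ) case, let $Q = (\Omc, \Sbf, q)$ be an element of the class, with $\mn{SMW}(q) \leq k$ for some fixed constant $k$, and let $D$ be an input $\Sbf$-database. The key observation, recalled in the preliminaries, is that every fact in $\chase{\Omc}{D}$ has its constants inside some guarded set of $D$, so the chase contains at most $|D| \cdot h(|Q|)$ facts and can be materialized in time $h'(|Q|) \cdot \mn{poly}(|D|)$ by standard Datalog evaluation. Since $D \models Q$ iff $\chase{\Omc}{D} \models q$, the task reduces to evaluating the UCQ $q$ on the finite structure $\chase{\Omc}{D}$.

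I would then invoke Marx's FPT algorithm for CQs of bounded submodular width, extended to UCQs by handling each disjunct separately, on input $(q, \chase{\Omc}{D})$. Because $\mn{SMW}(q) \leq k$ for the fixed constant $k$, this algorithm runs in time $g(|q|) \cdot |\chase{\Omc}{D}|^{c_k}$, with the exponent $c_k$ depending only on $k$. Composing with the chase-construction step yields total running time $F(|Q|) \cdot \mn{poly}(|D|)$, establishing fpt membership in the (GDLog, UCQ) case.

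For the (GTGD, UCQ) case, the plan is to apply the rewriting of \cite{BDFLP-PODS20} to obtain an equivalent OMQ $Q' = (\Omc', \Sbf, q')$ from (GDLog, UCQ) in time depending only on $|Q|$, and then run the (GDLog, UCQ) algorithm on $(Q', D)$. The main obstacle is to ensure that $\mn{SMW}(q') \leq k'$ for some constant $k'$ depending only on $k$ (and not on $|Q|$), so that the exponent in Marx's algorithm remains independent of $|Q|$. The rewriting of \cite{BDFLP-PODS20} enriches each CQ in $q$ with atoms that serve as guards reflecting GDLog derivations of chase-witnesses for the GTGD rules; in the spirit of Example~\ref{ex:cover}, such guard-enrichment can only shrink the space of edge-dominated submodular functions and therefore cannot increase submodular width. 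Making this SMW-preservation statement precise for the specific rewriting procedure of \cite{BDFLP-PODS20} is the principal technical verification required to complete the argument.
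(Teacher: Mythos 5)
Your handling of the (GDLog, UCQ) case is sound and is essentially what the paper does in spirit: guardedness bounds the chase by $|D|\cdot h(|Q|)$ facts, it is computable in fpt, and the bounded-submodular-width UCQ is then evaluated over it via the black box of Theorem~\ref{thm:ChenUnbCSPs}. The gap is exactly the step you flag at the end, and it is not a routine verification: the existential rewriting of \cite{BDFLP-PODS20} that converts a (GTGD, UCQ) OMQ into an equivalent (GDLog, UCQ) OMQ does not merely enrich the CQs of $q$ with guard atoms. It generates new disjuncts by \emph{contracting} variables of CQs in $q$ and by absorbing subqueries into tree-shaped portions of the chase, and variable identification can increase submodular width without bound (a quotient of a long path, which has width $1$, can be an arbitrary graph, e.g.\ a clique). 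So there is no reason why $\mn{SMW}(q')$ should be bounded by a function of $k$ alone, and the ``guards only shrink the space of edge-dominated submodular functions'' argument --- which is correct for pure guard addition, cf.\ Example~\ref{ex:cover} --- does not describe this rewriting. Indeed, the failure of syntactic width to be preserved under equivalence-preserving rewritings is precisely what motivates the covers machinery of Section~\ref{sec:cover}.

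The paper sidesteps the problem by choosing a rewriting that leaves the query untouched: Lemma~A.3 of \cite{BDFLP-PODS20} transforms $(\Omc,\Sbf,q)$ together with the input database $D$ into an OMQ $(\Omc^*,\Sbf^*,q)$ over \emph{linear} TGDs and a database $D^*$, with the \emph{same} UCQ $q$, and Lemma~A.1 of \cite{BDFLP-PODS20} shows that entailment is then witnessed by a finite, fpt-computable portion $D'$ of $\chase{\Omc^*}{D^*}$. One finally evaluates the original $q$ --- whose submodular width is bounded by hypothesis --- over $D'$ using Theorem~\ref{thm:ChenUnbCSPs}. To repair your argument, replace the GTGD-to-GDLog query rewriting by this query-preserving ontology-and-data rewriting; establishing an SMW bound for the existential rewriting instead is not something the paper provides, and it does not follow from the guard-addition intuition.
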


\begin{proof}
	We use some results from \cite{BDFLP-PODS20}. Let \mn{L} be the language of \emph{linear TGDS}, i.e. TGDs which contain only one atom in the body. For an OMQ $Q=(\Omc, \Sbf, q)$ from (GTGD, UCQ), and a database $D$,  Lemma A.3 from  \cite{BDFLP-PODS20}, shows how to construct in $\FPT$ another OMQ $(\Omc^*, \Sbf^*, q)$ from (\mn{L}, UCQ) and a database $D^*$ such that $D \models Q$ iff $D^* \models Q^*$. Then, Lemma A.1 from  \cite{BDFLP-PODS20} shows that deciding whether $D^* \models Q^*$ can be done by considering a finite portion of $\chase{\Omc^*}{D^*}$, which again can be computed in FPT. Thus, $D \models Q$ iff $D' \models q$, where $D'$ is a database which can be computed in FPT. Thus, assuming that $q$ has bounded submodular width, according to Theorem~\ref{thm:ChenUnbCSPs}, deciding whether $D' \models q$ is in FPT as well. 

	\end{proof}
	

\subsection{Results for (GTGD, UCQ)}

The main result which we show in this section is as follows.
 
\begin{theorem} [Main Result 1]
	\label{thm:unbArityGTGD}
	Let $\class{Q}$ be a r.e. class of OMQs from $(\text{GTGD}, \text{UCQ})$. Under the Exponential Time Hypothesis, $\class{Q}$ has bounded semantic submodular width  iff $\class{Q}$ is fixed-parameter tractable.
\end{theorem}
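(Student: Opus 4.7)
The plan is to derive Theorem~\ref{thm:unbArityGTGD} from Theorem~\ref{thm:unbArityGDLog} (the characterisation for (GDLog, UCQ)) by pulling every $Q \in \class{Q}$ back to an equivalent $Q^* \in$ (GDLog, UCQ) via the FPT-computable rewriting from~\cite{BDFLP-PODS20} that was already invoked in the proof of Lemma~\ref{lem:FPTsyntactic}. That rewriting preserves equivalence and OMQ evaluation, so it yields an FPT-equivalence between $\text{p-OMQ}(\class{Q})$ and $\text{p-OMQ}(\class{Q}^*)$, where $\class{Q}^* = \{Q^* \mid Q \in \class{Q}\}$; consequently both sides of the biconditional can be argued on the GDLog side.

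For the direction ``FPT $\Rightarrow$ bounded semantic submodular width'', I would argue as follows. If $\text{p-OMQ}(\class{Q})$ is FPT, then so is $\text{p-OMQ}(\class{Q}^*)$ by the rewriting. Theorem~\ref{thm:unbArityGDLog} (direction $1 \Rightarrow 2$) then delivers that the class of covers $\class{Q}^*_c$ has bounded submodular width. By Lemma~\ref{lem:coverProperties} each $Q^*_c \equiv Q^* \equiv Q$, and since (GDLog, UCQ) $\subseteq$ (GTGD, UCQ), the covers themselves witness bounded semantic submodular width of $\class{Q}$ inside (GTGD, UCQ).

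For the converse direction, assuming $\class{Q}$ has bounded semantic submodular width with uniform bound $k$, I would build an FPT evaluation algorithm that, on input $(Q, D)$, enumerates OMQs $Q' \in$ (GTGD, UCQ) in a canonical order and, for each, tests the two decidable conditions $\mn{SMW}(Q') \leq k$ and $Q \equiv Q'$; by the hypothesis a witness exists for every $Q \in \class{Q}$, so the loop halts. Given such a $Q'$, one reduces deciding $D \models Q$ to deciding $D \models Q'$ and invokes Lemma~\ref{lem:FPTsyntactic} on the sub-class of OMQs of submodular width at most $k$, obtaining a runtime of the form $f(|Q|) \cdot \mn{poly}(|D|)$ for some computable $f$ (the enumeration itself supplies a computable bound on $|Q'|$ in terms of $|Q|$).

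The main obstacle is more plumbing than mathematics: one must be sure that the rewriting of~\cite{BDFLP-PODS20} is simultaneously FPT-computable and equivalence-preserving, so that FPT status and semantic submodular width transfer cleanly between $\class{Q}$ and $\class{Q}^*$. A subtle point worth watching is that bounded semantic submodular width in (GTGD, UCQ) is a priori weaker than in (GDLog, UCQ), but the argument never needs the converse inclusion: the witnesses produced by Theorem~\ref{thm:unbArityGDLog} already live in (GDLog, UCQ) $\subseteq$ (GTGD, UCQ) and therefore certify the richer notion for free. The ETH-dependence of the statement is inherited directly from Theorem~\ref{thm:unbArityGDLog}.
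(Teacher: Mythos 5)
Your first direction (FPT $\Rightarrow$ bounded semantic submodular width) is exactly the paper's argument: pass to the existential rewritings, apply Theorem~\ref{thm:unbArityGDLog} to get bounded submodular width of the covers, and note that the covers are equivalent witnesses living in $(\text{GDLog},\text{UCQ}) \subseteq (\text{GTGD},\text{UCQ})$.

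The converse direction has a genuine gap, and it is precisely the obstacle the paper flags when it writes that the bounded-width equivalent class ``could be evaluated in fpt, but we do not know how to compute it.'' Your algorithm enumerates candidate OMQs $Q'$ and tests $\mn{SMW}(Q') \leq k$ and $Q \equiv Q'$, halting when a witness is found. Even granting that both tests are decidable (equivalence of $(\text{GTGD},\text{UCQ})$ OMQs is decidable, but you should cite this), termination on each individual $Q \in \class{Q}$ is not enough: the paper's definition of \FPT\ requires a \emph{computable} function $f$ with running time $f(|Q|)\cdot poly(|x|)$, and the search time is bounded only by the index of the smallest equivalent bounded-width OMQ in your enumeration. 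Bounded semantic submodular width guarantees that such a witness \emph{exists} for every $Q \in \class{Q}$, but gives no computable bound on its size in terms of $|Q|$; since $\class{Q}$ is merely r.e., you cannot take a maximum over all members of a given size either. Your parenthetical claim that ``the enumeration itself supplies a computable bound on $|Q'|$ in terms of $|Q|$'' is exactly the unproven step. The paper closes this gap constructively: the cover $Q'_c$ is computable with a computable size bound (Lemmas~\ref{lem:coversize} and~\ref{lem:coverComput}), and the witness is obtained by \emph{filtering} $Q'_c$ down to the CQs of submodular width at most $k'$. The non-trivial content is proving that this filtered OMQ $Q'_{c,k'}$ is still equivalent to $Q'$, which the paper establishes via the intersection $\class{D}^+_\cap$ of the extended characteristic databases of $Q'$ and of its bounded-width equivalent $Q'_k$, together with the auxiliary OMQ $Q'_{c,\cap}$ satisfying $Q'_{c,\cap} \subseteq Q'_{c,k'} \subseteq Q'_c$ and $\mn{SMW}(Q'_{c,\cap}) \leq k'$. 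None of this machinery appears in your proposal, and without it (or some other computable witness) the upper bound does not go through under the paper's definition of fixed-parameter tractability.
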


For every OMQ $Q \in  (\text{GTGD}, \text{UCQ})$, there exists an OMQ $Q' \in (\text{GDLog}, \text{UCQ})$,  \emph{the existential rewriting of $Q$}, such that $Q \equiv Q'$ \cite{BDFLP-PODS20}. This allows us to concentrate on OMQs from (\text{GDLog}, \text{UCQ}) for the lower bound: if $\class{Q}$ can be evaluated in fpt, then so can $\class{Q'}$, the class of existential rewritings of OMQs from $\class{Q}$. Then, according to Theorem~\ref{thm:unbArityGDLog}, the class of covers of OMQs from $\class{Q'}$, $\class{Q}'_c$, has bounded submodular width. But, $\class{Q} \equiv \class{Q'}_c$, and thus, $\class{Q}$ has bounded semantic submodular width. 

We turn our attention to the other direction of the theorem. In this case we know that there exists a class of OMQs $\class{Q}_k$ of bounded submodular width, such that $\class{Q} \equiv \class{Q}_k$, which according to Lemma~\ref{lem:FPTsyntactic} could be evaluated in fpt, but we do not know how to compute it. We start by having a looking at the notions we introduced for the syntactic characterization of OMQs from (GDLog, UCQ). A natural question is whether they are semantic, e.g. do two equivalent OMQs have identical sets of qi databases or characteristic databases? If that would be the case, one could obtain the result from Theorem~\ref{thm:unbArityGTGD} for OMQs from  (GDLog, UCQ) straightaway: the class of characteristic databases $\class{D}_{\class{Q}_k}$ for the class of OMQs $\class{Q}_k$ would have bounded submodular width, and thus so would the class of characteristic databases $\class{D}_{\class{Q}}$ for the class of OMQs $\class{Q}$, and the according to Theorem~\ref{thm:unbArityGDLog}, $\class{Q}$ could be evaluated in fpt. However, as the following example shows, this is not the case:

\begin{example}  
	\label{ex:differentCharDB}
	Let $Q_1=(\Omc_1, \Sbf, q_1)$ and $Q_2=(\Omc_2, \Sbf, q_2)$ be the following two OMQs:
	\begin{center}	 
		\begin{tabular}{lllllll}
			$\Omc_1=\{R(x,y) \to A(x)\}$ &&&$\Sbf=\{R\} $&&&$q_1=\exists x \ A(x)$\\ 
			$\Omc_2=\emptyset$&&& $\Sbf=\{R\} $&&& $q_2= \exists x,y \ R(x,y)$
		\end{tabular} 
	\end{center}	 
	Also, let $D=\{R(a,a)\}$ be an $\Sbf$-database. It can be checked that $Q_1 \equiv Q_2$, $D \models Q_1$, and $D \models Q_2$. However, $D$ is qi w.r.t. $Q_1$, but not w.r.t. $Q_2$. Also, $D \in  \class{D}_{Q_1} \setminus \class{D}_{Q_2}$. 
\end{example}

Still, when considering two equivalent OMQs $Q_1$ and $Q_2$, it is possible to construct another equivalent OMQ $Q_{1,2}$ based on one of the original OMQs and the intersection of the sets of (extended) characteristic databases of the two OMQs, similar to the way the notion of cover has been constructed in Section~\ref{sec:cover} based on the set of extended characteristic databases. This will allow us to prove direction ``$\Rightarrow$'' of Theorem~\ref{thm:unbArityGTGD}:

\begin{proof}

As $\class{Q}$  has bounded semantic submodular width, there exists $k>0$ and a class of OMQs $\class{Q}_k$ of OMQs from $(\text{GTGD}, \text{UCQ})$ of submodular width at most $k$ such that $\class{Q} \equiv \class{Q}_k$. By Lemma~\ref{lem:FPTsyntactic}, $\class{Q}_k$ can be evaluated in FPT, and thus so can $\class{Q}'_k$, the class of its existential rewritings. As $\class{Q}'_k \subseteq (\text{GDLog}, \text{UCQ})$, from Theorem~\ref{thm:unbArityGDLog}, it follows that the class of characteristic databases $\class{D}_{\class{Q}'_k}$ also has bounded submodular width. Let $k'$ be such a bound. 

Let $Q'$ be an OMQ from $\class{Q}'$ and $Q'_k$ be an OMQ from $\class{Q}'_k$ such that $Q' \equiv Q'_k$. Also, let $\class{D}^+_{\cap}=\class{D}^+_{Q'} \cap \class{D}^+_{Q'_k}$. We  construct a new OMQ $Q'_{c, \cap}$ based on $\class{D}^+_{\cap}$ and $Q'$ similarly to the way the cover OMQ $Q'_c$ of $Q'$ is constructed based on $\class{D}^+_{Q'}$. At the same time, it is possible to show that $Q'_{c, \cap} \equiv Q'$, and furthermore, if $Q'_{c, \cap}$ is of the form $(\Omc, \Sbf, q'_{c, \cap})$ and $Q'_c=(\Omc, \Sbf, q'_c)$, each CQ in $q'_{c, \cap}$ is also a CQ in $q'_c$. Again, similarly to the result from Lemma~\ref{lem:connection_div_cover} it is possible to show that $\mn{SMW}(Q'_{c, \cap}) \leq \mn{SMW}(\class{D}^+_{\cap})$. As $\class{D}^+_{\cap} \subseteq \class{D}_{Q'_k}$, it follows that $\mn{SMW}(\class{D}^+_{\cap}) \leq k'$, thus $\mn{SMW}(Q'_{c, \cap}) \leq k'$. 
Let $Q'_{c, k'}$ be the OMQ obtained from $Q'_c$ by retaining only those CQs from $q'_c$ which have SMW $\leq k'$. By construction, $Q'_{c, \cap} \subseteq Q'_{c, k'} \subseteq Q'_c$, $Q'_{c, \cap} \equiv Q'_c$, and thus 
$Q'_{c, k'} \equiv Q'$. As $\mn{SMW}(Q'_{c, k'}) \leq k'$, it can be evaluated in fpt. At the same time, $Q'_{c, k'}$ is computable from $Q'$, and thus $Q'$, and subsequently $Q$, can be evaluated in fpt. Thus, so can the corresponding classes $\class{Q'}$ and $\class{Q}$.

\qed
\end{proof}

Finally, we obtain as a corollary of Theorem~\ref{thm:unbArityGDLog} and Theorem~\ref{thm:unbArityGTGD} the following result concerning covers which was already anticipated in Section~\ref{sec:cover}:

\begin{corollary}
	\label{cor:semWitnessCovers}
	Let $\class{Q}$ be a class of OMQs from (GDLog, UCQ). Then, $\class{Q}$ has bounded semantic submodular width iff its class of covers $\class{Q}_c$ has bounded submodular width.
	\end{corollary}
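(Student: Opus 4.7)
The plan is to derive the corollary by chaining Theorem~\ref{thm:unbArityGDLog} and Theorem~\ref{thm:unbArityGTGD}, with the understanding that $\class{Q}$ is r.e.\ and that we work under the Exponential Time Hypothesis (as inherited from those theorems).

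The direction ``$\Leftarrow$'' is essentially immediate. By Lemma~\ref{lem:coverProperties}, each cover $Q_c$ is an OMQ from $(\text{GDLog}, \text{UCQ})$ equivalent to the original OMQ $Q$. Hence, if $\class{Q}_c$ has bounded (syntactic) submodular width, then the class $\class{Q}_c$ itself serves as a semantic witness of the same bounded submodular width for $\class{Q}$.

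For the direction ``$\Rightarrow$'', I would exploit the fact that $(\text{GDLog}, \text{UCQ})$ is a sublanguage of $(\text{GTGD}, \text{UCQ})$. Thus Theorem~\ref{thm:unbArityGTGD} applies directly to $\class{Q}$: bounded semantic submodular width of $\class{Q}$ implies that $\text{p-OMQ}(\class{Q})$ is fixed-parameter tractable. Now invoking the equivalence ``$1 \Leftrightarrow 2$'' from Theorem~\ref{thm:unbArityGDLog} (valid since $\class{Q} \subseteq (\text{GDLog}, \text{UCQ})$), we conclude that $\class{Q}_c$ has bounded submodular width.

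The main (minor) technical point I would have to address is that both theorems require their input class to be recursively enumerable. Under the assumption that $\class{Q}$ is r.e., we need $\class{Q}_c$ also to be r.e., so that Theorem~\ref{thm:unbArityGDLog} can be applied with the right class in its scope. This follows from Lemma~\ref{lem:coverComput}, which establishes that the cover $Q_c$ is computable from $Q$: enumerating $\class{Q}$ and applying the cover construction to each member yields a recursive enumeration of $\class{Q}_c$. No further obstacle arises, since all substantive work (the fpt-reduction from uniform CSP, the equivalence of covers and originals, and the characterisation via covers) has already been carried out in the referenced theorems and lemmas.
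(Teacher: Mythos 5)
Your proposal is correct and matches the paper's approach: the paper gives no explicit proof, stating only that the corollary follows by chaining Theorem~\ref{thm:unbArityGDLog} and Theorem~\ref{thm:unbArityGTGD}, which is exactly what you do (your observation that the ``$\Leftarrow$'' direction also follows directly from Lemma~\ref{lem:coverProperties} without ETH is a harmless simplification). You also correctly flag the only delicate point, namely that the corollary as stated omits the r.e.\ and ETH hypotheses that the cited theorems require, and you resolve it the same way the paper implicitly does, by inheriting them.
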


The question remains open whether covers are witnesses for semantic submodular width when considered individually, as opposed to witnesses for bounded semantic submodular width when considered as classes, as established in Corollary~\ref{cor:semWitnessCovers}. We leave this open for now. 

\section{Revisiting the Bounded Arity Case}
\label{sec:revisiting}

We revisit the characterization for fpt evaluation of OMQs from (GTGD, UCQ) over bounded arity schemas from~\cite{BDFLP-PODS20}. A class of OMQs $\class{Q}$ is \emph{over bounded arity schemas} if there exists a $k$ such that every schema in some OMQ in $\class{Q}$ contains symbols of arity at most $k$. 

\begin{theorem}[Theorem 5.3, \cite{BDFLP-PODS20}]
	\label{thm:boundArityGTGDChar}
	Let $\class{Q}$ be a r.e. class of OMQs from (GTGD, UCQ) over bounded arity schemas. Assumming $\FPT \neq \W $, the following are equivalent : 
	\begin{enumerate}
		\item $\text{p-OMQ}(\class{Q})$  is fixed-paramater tractable.
		\item $\class{Q}$ has bounded semantic treewidth. 
	\end{enumerate}
	If either statement is false, then $\text{p-OMQ}(\class{Q})$  is $\W$-hard. 
\end{theorem}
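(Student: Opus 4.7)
The plan is to mimic the proof scheme of Theorem~\ref{thm:unbArityGTGD} verbatim, replacing submodular width with treewidth and substituting for Theorem~\ref{thm:ChenUnbCSPs} (Chen et al., ETH-based) Grohe's classical theorem~\cite{Grohe07}: under $\FPT \neq \W$, a r.e. class $\class{C}$ of bounded-arity structures has $\text{p-CSP}(\class{C},\_)$ in FPT if and only if $\class{C}$ has bounded treewidth modulo homomorphic equivalence, and otherwise the problem is $\W$-hard. Because the reduction of Theorem~\ref{thm:pCSPtopOMQ} is a black-box fpt-reduction that does not depend on schema arity, the $\W$-hardness claim for $\text{p-OMQ}(\class{Q})$ transfers directly from Grohe's $\W$-hardness for $\text{p-CSP}(\class{D}_{\class{Q}},\_)$.

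First I would establish the bounded-arity analog of Theorem~\ref{thm:unbArityGDLog}: for a r.e. class $\class{Q}$ from (GDLog, UCQ) over schemas of arity at most $a$, the statements (i) $\text{p-OMQ}(\class{Q})$ is fpt, (ii) $\class{Q}_c$ has bounded treewidth, and (iii) $\class{D}_{\class{Q}}$ has bounded treewidth are equivalent, with $\W$-hardness in case of failure. For $(iii) \Rightarrow (ii)$, I would prove the treewidth analog of Lemma~\ref{lem:connection_div_cover}, namely $\mn{TW}(Q_c) \leq \max(\mn{TW}(\class{D}_Q), a)$. The proof mirrors the SMW proof: start with a tree decomposition of $D$ of width $\mn{TW}(D)$, glue onto it the tree decompositions of the guarded unravelings used to form $D^+$, and project onto the atoms of $S$; the added bags are guarded in $D^+$, hence have size at most $a$. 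For $(ii) \Rightarrow (i)$, I would invoke the treewidth analog of Lemma~\ref{lem:FPTsyntactic}: evaluation of an OMQ from (GDLog, UCQ) of bounded treewidth is in FPT via Lemmas~A.1 and A.3 of~\cite{BDFLP-PODS20} together with the fpt part of Grohe's theorem. For the implication $\neg(iii) \Rightarrow \neg(i)$ with $\W$-hardness, I would combine the fpt-reduction of Theorem~\ref{thm:pCSPtopOMQ}, applied uniformly over the r.e. enumeration of $\class{Q}$, with Grohe's theorem.

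Lifting from (GDLog, UCQ) to (GTGD, UCQ) follows the template of Theorem~\ref{thm:unbArityGTGD}. For $(1) \Rightarrow (2)$ with $\W$-hardness: if $\class{Q} \subseteq (\text{GTGD},\text{UCQ})$ has unbounded semantic treewidth, then so does its existential rewriting $\class{Q}'$ into (GDLog, UCQ), so by the bounded-arity analog of Theorem~\ref{thm:unbArityGDLog} the class $\class{D}_{\class{Q}'}$ has unbounded treewidth and $\text{p-OMQ}(\class{Q}')$ is $\W$-hard; since $\class{Q} \equiv \class{Q}'$, so is $\text{p-OMQ}(\class{Q})$. For $(2) \Rightarrow (1)$: take $\class{Q}_k$ of bounded treewidth $k$ equivalent to $\class{Q}$, rewrite to obtain $\class{Q}'_k \subseteq (\text{GDLog},\text{UCQ})$, and conclude by the bounded-arity analog that $\class{D}_{\class{Q}'_k}$ has bounded treewidth. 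Then form the intersection covers $Q'_{c,\cap}$ as in the proof of Theorem~\ref{thm:unbArityGTGD}, truncate each to CQs of bounded treewidth to obtain computable equivalent OMQs $Q'_{c,k'}$ of bounded treewidth, and evaluate in fpt using the treewidth analog of Lemma~\ref{lem:FPTsyntactic}.

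The main obstacle, as in the unbounded-arity case, is verifying that each step of the cover construction and the intersection trick behaves correctly with respect to the monotonic measure treewidth. Since treewidth is monotonic, the equivalence-based lowering of treewidth cannot be exhibited by simply enriching CQs with guards (the SMW-specific phenomenon of Example~\ref{ex:cover}); instead, covers serve here as a computable \emph{truncation target} rather than a width-lowering device. The reassuring point is that the additive bound in the analog of Lemma~\ref{lem:connection_div_cover} suffices in the bounded-arity regime, since the arity $a$ is a fixed constant for the class, so bounded treewidth of $\class{D}_{\class{Q}}$ transfers to bounded treewidth of $\class{Q}_c$ with only an additive constant of $a$.
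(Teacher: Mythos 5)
Your proposal matches the paper's own argument essentially step for step: it establishes the bounded-arity GDLog analogue (the paper's Theorem~\ref{thm:boundedArityGDLog}) by using Grohe's theorem as a black box together with the fpt-reduction of Theorem~\ref{thm:pCSPtopOMQ}, proves exactly the paper's Lemma~\ref{lem:treewidth_div_cover} bound $\mn{TW}(Q_c) \leq \max(r, \mn{TW}(\class{D}_Q))$ by the same gluing-and-projecting argument, and lifts to (GTGD, UCQ) via existential rewritings and the intersection-cover truncation as in Theorem~\ref{thm:unbArityGTGD}. The proposal is correct and takes the same route as the paper, with your closing remark about covers acting as a truncation target rather than a width-lowering device being consistent with the paper's observation that treewidth, unlike submodular width, is monotonic.
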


 The characterization generalizes a previous result concerning complexity of evaluating OMQs from ($\mathcal{ELHI}_{\bot}$, UCQ) \cite{BFLP19} and
 can be seen as a generalization of Grohe's complexity results regarding the parameterized complexity of uniform CSPs over bounded arity schemas \cite{Grohe07}: 
 
  \begin{theorem}[Theorem 1, \cite{Grohe07}]
  	\label{thm:grohe}
 	Assume that $\FPT \neq \W$. Then for every r.e. class $\class{C}$ of structures of bounded arity the following statements are equivalent: 
 	\begin{enumerate}
 		\item $\text{CSP}(\class{C}, \_)$ is in polynomial time. 
 		\item $\text{p-CSP}(\class{C}, \_)$  is fixed-paramater tractable.
 		\item $\class{C}$ has bounded treewidth modulo homomorphic equivalence. 
 	\end{enumerate}
 	If either statement is false, then $\text{p-CSP}(\class{C}, \_)$  is $\W$-hard. 
 \end{theorem}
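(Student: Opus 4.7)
The plan is to establish the cycle of implications $3 \Rightarrow 1 \Rightarrow 2 \Rightarrow 3$, with the final implication strengthened to the $\W$-hardness claim, since this gives both the dichotomy and non-membership in $\FPT$ under the complexity assumption. First I would handle $3 \Rightarrow 1$: if $\class{C}$ has bounded treewidth modulo homomorphic equivalence with bound $k$, then for every $A \in \class{C}$ there is a homomorphically equivalent $A'$ with $\mn{TW}(A') \leq k$, which in the bounded-arity setting can be obtained by computing the core of $A$ (polynomial-time here, since cores of bounded-treewidth structures are themselves bounded-treewidth and admit polynomial-time core computation via Dalmau--Kolaitis--Vardi). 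I would then apply classical Yannakakis / Chekuri--Rajaraman dynamic programming on a tree decomposition of $A'$ to decide $A' \to B$ in polynomial time. The step $1 \Rightarrow 2$ is immediate as polynomial time is a special case of FPT.

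The main obstacle is the hard direction $2 \Rightarrow 3$ in its strong form: its failure must yield $\W$-hardness. I would prove the contrapositive by constructing an fpt-reduction from the parameterized $k$-clique problem to $\text{p-CSP}(\class{C}, \_)$ under the assumption that $\class{C}$ has unbounded treewidth modulo homomorphic equivalence. First, by replacing each member of $\class{C}$ with its core I arrange that $\class{C}$ has unbounded syntactic treewidth. Then I invoke the Excluded Grid Theorem of Robertson and Seymour: for every $r$ there is a computable $g(r)$ such that every graph of treewidth at least $g(r)$ contains the $r \times r$ grid as a minor. Given an input $(G,k)$ of $k$-Clique, the r.e.\ assumption on $\class{C}$ lets me effectively enumerate $\class{C}$ until I locate some $A \in \class{C}$ whose Gaifman graph has treewidth large enough to enforce a grid minor of dimension a suitable function of $k$.

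The genuinely delicate step, and the one I expect to be the main obstacle, is then designing the target structure $B$ so that the homomorphisms $A \to B$ correspond exactly to $k$-cliques of $G$ embedded along the grid minor inside $A$. A grid is only a minor, not a substructure: grid vertices are represented by branch sets of $A$ and grid edges by paths in $A$, so $B$ must be built with gadgets (paths, connector vertices, and a colouring scheme) that force any homomorphism to trace those branch sets and paths coherently and to encode a clique edge relation of $G$ along the horizontal and vertical lines of the grid. Bounded arity is crucial in this construction because it keeps the tuples of $A$ local to individual grid cells, so that the minor can be recovered from the hypergraph and the gadgets can be realised uniformly. One must also verify that the reduction computes $B$ in time $f(k)\cdot\mn{poly}(|G|)$ and that $|A|$ depends only on $k$, so that the parameter of the produced instance is bounded by a computable function of $k$.

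Once $\W$-hardness of $\text{p-CSP}(\class{C}, \_)$ is in hand, the assumption $\FPT \neq \W$ yields non-membership in FPT, closing $2 \Rightarrow 3$, and a fortiori non-membership in polynomial time, closing $1 \Rightarrow 3$. This completes the cycle and simultaneously establishes the final $\W$-hardness clause of the theorem.
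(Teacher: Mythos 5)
This statement is not proved in the paper at all: it is Grohe's Theorem~1, quoted verbatim from \cite{Grohe07} and used strictly as a black box (indeed, the whole point of Section~\ref{sec:revisiting} is to \emph{avoid} reopening Grohe's construction). So there is no in-paper proof to compare against; what you have written is a sketch of Grohe's original argument, and it should be judged as such.

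As a sketch it has the right skeleton ($3 \Rightarrow 1 \Rightarrow 2$, then the contrapositive of $2 \Rightarrow 3$ in the strong form of an fpt-reduction from parameterized clique), but there are two genuine gaps. First, your route for $3 \Rightarrow 1$ rests on ``polynomial-time core computation via Dalmau--Kolaitis--Vardi,'' which is not what that result says and is not available: computing (or even recognizing) cores is NP-hard in general, and even under the promise that the core has treewidth at most $k$ you would still need to verify a homomorphism \emph{from} the possibly large-treewidth structure $A$, which is itself intractable. The actual argument of Dalmau, Kolaitis and Vardi is precisely designed to sidestep core computation: when $\mathrm{core}(A)$ has treewidth less than $k$, the existential $k$-pebble game (equivalently, the canonical $k$-consistency / $k$-Datalog test) decides $A \to B$ directly in polynomial time, without ever materializing the core. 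Second, the direction $2 \Rightarrow 3$ is the entire technical substance of the theorem, and your proposal names the obstacle --- building $B$ so that homomorphisms $A \to B$ trace the branch sets and connecting paths of a grid \emph{minor} and encode the clique relation of $G$ --- without resolving it. The delicate point you would still have to handle is that the grid minor must live in the \emph{core} of $A$ (not merely in $A$), since otherwise a homomorphism can retract $A$ past the gadgetry and produce spurious yes-instances; arranging this, and making the gadgets work for arbitrary relational structures of bounded arity rather than graphs, is exactly what occupies the bulk of Grohe's proof. As it stands, the proposal is a correct roadmap of the known proof rather than a proof.
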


 Direction~`2 $\Rightarrow$ 1' of Theorem~\ref{thm:boundArityGTGDChar} is established in \cite{BDFLP-PODS20} using arguments regarding the chase construction and applying the result from \cite{Grohe07} on a finite portion of the chase. As concerns the lower bound (direction `1 $\Rightarrow$ 2'), the proof from \cite{BDFLP-PODS20} is extremely complex: it lifts in  a non-trivial way the fpt reduction from the parameterized $k$-clique problem to parameterized uniform CSPs over  
 bounded arity schemas from \cite{Grohe07}. Lifting the reduction  to the OMQ case required in particular introspection into a certain construction used in the original proof, \emph{the Grohe database},  and modifying that construction. 

Here we sketch how it is possible to establish the results from  Theorem~\ref{thm:boundArityGTGDChar} using Grohe's result as a black box by employing the reduction from Theorem~\ref{thm:pCSPtopOMQ}. This shows the potential of the reduction for lifting results from the CQ evaluation realm to the OMQ one. 
We start by establishing a counterpart of Main Result 3 for the case of bounded arities OMQs. 

 \begin{theorem}[GDLog Bounded Arity Characterization]
 	\label{thm:boundedArityGDLog}
 	Let $\class{Q}$ be a r.e. class of OMQs from $(\text{GDLog}, \text{UCQ})$ over bounded arity schemas. 	Assuming that $\FPT \neq \W$: 
 	\begin{enumerate}
 		\item $\text{p-OMQ}(\class{Q})$ is fixed-parameter tractable iff 
 		\item $\class{Q}_{c}$ has bounded tree-width iff 
 		\item $\class{D}_{\class{Q}}$ has bounded tree-width. 
 	\end{enumerate}
 	If either statement is false, then $\text{p-CSP}(\class{C}, \_)$  is $\W$-hard.
 \end{theorem}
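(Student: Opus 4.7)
The plan is to mirror, step for step, the proof of Theorem~\ref{thm:unbArityGDLog}, substituting treewidth for submodular width, Grohe's Theorem~\ref{thm:grohe} for Theorem~\ref{thm:ChenUnbCSPs}, and the assumption $\FPT \neq \W$ for the Exponential Time Hypothesis. Crucially, the fpt-reduction of Theorem~\ref{thm:pCSPtopOMQ} and the cover machinery of Section~\ref{sec:cover} are stated without any arity restriction, so they transport to the bounded-arity setting without modification. The only new ingredients are (i) the observation that, when arity is bounded by $r$, gluing guarded unravelings into a tree decomposition raises treewidth by at most $r$; and (ii) the standard fact that bounded treewidth implies bounded submodular width, allowing us to re-use Lemma~\ref{lem:FPTsyntactic} in the upper bound.

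For the direction $1 \Rightarrow 3$, I would argue by contraposition: suppose $\class{D}_{\class{Q}}$ has unbounded treewidth. Every $D \in \class{D}_{\class{Q}}$ is a core (by the definition of minimal diversification), and over bounded arity schemas cores witness treewidth modulo homomorphic equivalence. Hence $\class{D}_{\class{Q}}$ has unbounded treewidth modulo homomorphic equivalence, so by Grohe's Theorem~\ref{thm:grohe}, $\text{p-CSP}(\class{D}_{\class{Q}},\_)$ is $\W$-hard. For each $Q \in \class{Q}$, Theorem~\ref{thm:pCSPtopOMQ} furnishes an fpt-reduction from $\text{p-CSP}(\class{D}_Q,\_)$ to $\text{p-OMQ}(\{Q\})$; taking the union gives an fpt-reduction from $\text{p-CSP}(\class{D}_{\class{Q}},\_)$ to $\text{p-OMQ}(\class{Q})$, so $\text{p-OMQ}(\class{Q})$ is $\W$-hard as well, establishing the final ``either statement false'' clause at the same time.

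For $3 \Rightarrow 2$ I would reprove the treewidth analogue of Lemma~\ref{lem:connection_div_cover}: if every $D \in \class{D}_Q$ has $\mn{TW}(D) \leq k$ and every schema symbol has arity at most $r$, then every CQ $p_c \in q_c$ has $\mn{TW}(p_c) \leq \max(k,r-1)$. The argument follows the proof of Lemma~\ref{lem:connection_div_cover} verbatim up to the extension of a tree decomposition $\delta$ of $D$ to a decomposition $\delta'$ of $D^+$ by pasting in the guarded tree decompositions of the guarded unravelings; then, instead of invoking edge-dominance of a submodular function, one simply observes that each added bag is a guarded set and hence has size at most $r$. The restriction to $S$ and to $D[p_c]$ only removes vertices from bags, so the width bound is preserved.

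Finally, for $2 \Rightarrow 1$, I would use the fact that $Q_c$ is effectively computable from $Q$ (Lemma~\ref{lem:coverComput}) together with the fact that bounded treewidth trivially implies bounded submodular width, so Lemma~\ref{lem:FPTsyntactic} applied to $Q_c$ (and the equivalence $Q \equiv Q_c$ from Lemma~\ref{lem:coverProperties}) yields fpt evaluation of $\text{p-OMQ}(\class{Q})$. The main obstacle, as in the unbounded arity case, is the treewidth version of Lemma~\ref{lem:connection_div_cover}; here it is genuinely easier than the submodular-width version because we can replace the subtle edge-dominance reasoning by the direct observation that guarded bags have bounded size, but care is still required to verify that the deletion/projection step used to turn a tree decomposition of $D^+$ into one of $S \subseteq D^+$ does not break the guardedness of new bags beyond what arity already bounds.
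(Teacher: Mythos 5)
Your proposal follows the paper's proof essentially step for step: the paper likewise derives direction $1 \Rightarrow 3$ from the reduction of Theorem~\ref{thm:pCSPtopOMQ} together with Grohe's Theorem~\ref{thm:grohe} used as a black box, derives $3 \Rightarrow 2$ from a treewidth analogue of Lemma~\ref{lem:connection_div_cover} (stated as Lemma~\ref{lem:treewidth_div_cover}, whose bound $\max(r, \mathsf{TW}(\class{D}_Q))$ agrees with your $\max(k, r-1)$ up to the paper's bag-size convention for treewidth), and derives $2 \Rightarrow 1$ from computability of the cover plus a black-box tractability result. The only inessential deviation is that for the upper bound you route through the implication that bounded treewidth yields bounded submodular width and reuse Lemma~\ref{lem:FPTsyntactic}, whereas the paper invokes Grohe's theorem directly; both are sound.
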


The strategy to prove Theorem~\ref{thm:boundedArityGDLog}
is similar to the one used to prove Theorem~\ref{thm:unbArityGDLog}, except that this time Theorem~\ref{thm:grohe} from \cite{Grohe07} is used as a blackbox, as opposed to the unbounded arity case, where the results for uniform CSPs over unbounded arity schemas from Theorem~\ref{thm:ChenUnbCSPs} were used as a blackbox. This is the case both for showing the upper bound, direction `2 $\Rightarrow$ 1' of the theorem, and the lower bound,  direction `1 $\Rightarrow$ 3' of the theorem. In the latter case, we use again our reduction from parameterized uniform CSPs to parameterized OMQs. What still needs to be shown, is the connection between the treewidth of the cover of an OMQ and the treewidth of the set of characteristic databases of the same OMQ. It follows straightaway from the proof of Lemma~\ref{lem:connection_div_cover} that:

\begin{lemma}
	\label{lem:treewidth_div_cover}
	For $Q=(\Omc, \Sbf, q)$ an OMQ from $(\text{GDLog}, UCQ)$ with schema arity at most $r$, and $Q_c=(\Omc, \Sbf, q_c)$ its cover, it is the case that $\mn{TW}(Q_c) \leq \mn{max}(r, \mn{TW}(\class{D}_Q))$. 
\end{lemma}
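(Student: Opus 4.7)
The plan is to mimic the construction used in the proof of Lemma~\ref{lem:connection_div_cover}, tracking bag size instead of submodular $f$-cost. Fix a CQ $p_c$ in $q_c$; by Definition~\ref{def:cover}, $D[p_c]$ has the form $h(p) \cup S$, where $p$ is a CQ in $q$, $h$ is a homomorphism from $p$ into $\chase{\Omc}{D^+}$ for some $D^+ = \mn{ext}(D, \uparrow, D_0) \in \class{D}^+_Q$, and $D \subseteq S \subseteq D^+$ is such that every atom of $h(p)$ has its constants contained in the (maximal guarded) constants of some atom of $S$. In particular $D \in \class{D}_Q$.

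Take a tree decomposition $\delta = (T, \chi)$ of $H_D$ with maximum bag size $\mn{TW}(D)$. Extend $\delta$ to a tree decomposition $\delta' = (T', \chi')$ of $H_{D^+}$ exactly as in the proof of Lemma~\ref{lem:connection_div_cover}, by attaching, for every maximal guarded set $\abf$ of $D$ used to build $D^+$, the guarded tree decomposition of the guarded unraveling $D_0^\abf$ at the node of $T$ whose bag contains $\abf$. Every bag introduced by the unraveling is itself a guarded set of $D^+$, hence has size at most $r$. Now project $\delta'$ onto $\adom{S}$: replace each bag $\chi'(t)$ by $\chi'(t) \cap \adom{S}$, delete the now-empty bags, and re-link the resulting forest into a tree $T_S$ in the natural way. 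Call the result $\delta_S = (T_S, \chi_S)$.

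I then claim $\delta_S$ is a tree decomposition of $H_{D[p_c]}$ of maximum bag size at most $\mn{max}(\mn{TW}(D), r)$. Size is immediate from the projection and the two bounds above. The coverage and connectivity conditions transfer from $\delta'$ to $H_S$ by the same projection argument used for $\delta_S$ in Lemma~\ref{lem:connection_div_cover} (here we crucially use that every atom in $S \setminus D$ is itself a maximal guarded set of $D^+$, so no guarded set of $S$ is broken by the projection). It remains to handle the extra atoms in $h(p) \setminus S$: for each such atom $r(\abf)$, Definition~\ref{def:cover}(2b) gives an atom $r'(\abf')$ in $S$ with $\abf \subseteq \abf'$, so $\abf' \subseteq \chi_S(t)$ for some $t$, and therefore $\abf \subseteq \chi_S(t)$ as well; moreover $\adom{h(p)} \subseteq \adom{S}$ for the same reason, so the domain condition also holds. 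Since this construction applies to an arbitrary $p_c \in q_c$ with a witness $D \in \class{D}_Q$, we conclude $\mn{TW}(Q_c) \leq \mn{max}(r, \mn{TW}(\class{D}_Q))$. The only mildly delicate point, exactly as in Lemma~\ref{lem:connection_div_cover}, is verifying that projection preserves the tree-decomposition axioms; but this is guaranteed because the bags contributed by guarded unravelings are precisely guarded sets of $D^+$, and the subsets of $S$ they intersect remain guarded in $S$.
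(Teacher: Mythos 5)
Your proof is correct and is essentially the paper's own argument: the paper simply remarks that the lemma ``follows straightaway from the proof of Lemma~\ref{lem:connection_div_cover}'', i.e.\ one reruns that construction replacing the submodular $f$-cost bookkeeping by bag sizes, with the bags coming from guarded unravelings bounded by the arity $r$ and the bags from the decomposition of $D$ bounded by $\mn{TW}(\class{D}_Q)$ --- exactly what you do, including the final step of absorbing the atoms of $h(p)$ into their guards in $S$.
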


Direction `3 $\Rightarrow$ 2' of Theorem~\ref{thm:boundedArityGDLog} follows from Lemma~\ref{lem:treewidth_div_cover}. By using Theorem~\ref{thm:boundedArityGDLog}, we can then retrieve the results from Theorem~\ref{thm:boundArityGTGDChar} similarly as we did for Theorem~\ref{thm:unbArityGTGD}. 



\section{Conclusions and Future Work}

In this work, we characterized the fpt border for evaluating classes of parameterized OMQs based on guarded TGDs and UCQs in the unbounded arity case. For ontologies expressed in GDLog, we provided a syntactic characterization based on new constructions, namely sets of characteristic databases and covers of an OMQ. On the way to establish this result, we introduced an fpt reduction from evaluating parameterized uniform CSPs to evaluating parameterized OMQs.

The reduction enables the lifting of results from the CSP world to the OMQ one in a modular fashion. To further showcase this, we revisited the case of OMQs over bounded arity schemas, previously addresed in \cite{BDFLP-PODS20}. For classes of such OMQs from (GDLog, UCQ) we  established a new syntactic characterization of the tractability border in terms of covers and characteristic databases, while for classes of OMQs from (GTGD, UCQ), we showed how the reduction enabled a much simpler proof for the semantic characterization from \cite{BDFLP-PODS20}. 

Here we only considered Boolean OMQs, as the corresponding results for CQ evaluation in the unbounded arity case have also only been established in the Boolean case. As future work, we plan to extend our work to the non-Boolean case. Many results from the database world concerning efficiency of performing tasks like counting \cite{chen_et_al:LIPIcs:2015:4980}, enumeration \cite{berkholz_et_al:LIPIcs:2019:11002,carmeli_et_al:LIPIcs:2018:8598}, and so on, use structural measures on the queries similar to the ones involved in the characterizations for evaluation. Thus, by generalizing our constructs to the non-Boolean case, depending on which structural measures are preserved when transitioning from sets of characteristic databases to covers of OMQs, we might be able to lift such results to the OMQ world.

\bibliographystyle{plain}
\bibliography{references_unb}
\end{document}